\documentclass{article}
\usepackage{iclr2027_conference,times}
\usepackage{hyperref}
\usepackage{url}
\usepackage{fontawesome5}

\iclrfinalcopy
\AddToHook{cmd/maketitle/after}{\lhead{Preprint}}

\usepackage{amsmath,amsfonts,bm}

\newcommand{\captiona}{{\em (a)}}
\newcommand{\captionb}{{\em (b)}}
\newcommand{\captionc}{{\em (c)}}

\def\Figref#1{Figure~\ref{#1}}

\def\Secref#1{Section~\ref{#1}}
\def\eqref#1{equation~\ref{#1}}
\def\Eqref#1{Equation~\ref{#1}}
\def\Algref#1{Algorithm~\ref{#1}}

\def\1{\bm{1}}

\def\vone{{\bm{1}}}

\def\vb{{\bm{b}}}

\def\ve{{\bm{e}}}

\def\vq{{\bm{q}}}

\def\evlambda{{\lambda}}

\def\evg{{g}}

\def\mE{{\bm{E}}}

\def\mG{{\bm{G}}}
\def\mH{{\bm{H}}}

\def\mP{{\bm{P}}}
\def\mQ{{\bm{Q}}}
\def\mR{{\bm{R}}}
\def\mS{{\bm{S}}}

\def\mW{{\bm{W}}}
\def\mX{{\bm{X}}}

\def\mZ{{\bm{Z}}}

\DeclareMathAlphabet{\mathsfit}{\encodingdefault}{\sfdefault}{m}{sl}
\SetMathAlphabet{\mathsfit}{bold}{\encodingdefault}{\sfdefault}{bx}{n}
\newcommand{\tens}[1]{\bm{\mathsfit{#1}}}

\def\tX{{\tens{X}}}

\def\gA{{\mathcal{A}}}
\def\gB{{\mathcal{B}}}

\def\gN{{\mathcal{N}}}

\def\gP{{\mathcal{P}}}

\def\gV{{\mathcal{V}}}

\def\sI{{\mathbb{I}}}
\def\sJ{{\mathbb{J}}}

\def\sO{{\mathbb{O}}}

\def\sU{{\mathbb{U}}}
\def\sV{{\mathbb{V}}}

\def\emG{{G}}

\def\emR{{R}}
\def\emS{{S}}

\def\emZ{{Z}}

\newcommand{\pmodel}{p_{\rm{model}}}

\newcommand{\Ls}{\mathcal{L}}
\newcommand{\R}{\mathbb{R}}

\newcommand{\softmax}{\mathrm{softmax}}

\def\vlambda{{\bm{\lambda}}}

\def\Appref#1{Appendix~\ref{#1}}

\def\Tabref#1{Table~\ref{#1}}

\usepackage{amsthm}
\newtheorem{proposition}{Proposition}

\usepackage{algorithm}
\usepackage{algorithmic}
\usepackage{enumitem}
\usepackage{needspace}
\usepackage{graphicx}
\usepackage{wrapfig}
\usepackage{subcaption}
\usepackage{booktabs}
\usepackage{multirow}
\usepackage{makecell}
\usepackage{xcolor}
\usepackage{placeins}
\usepackage{lipsum}

\usepackage{xcolor}

\title{Pruned CTC for Memory-Efficient Large-\\Vocabulary ASR Training}

\author{%
\textbf{%
Yifan Yang\textsuperscript{1}\thanks{%
Work done during internship at Alibaba Token Hub, Alibaba Group.}\phantom{$^{*}$},
Xiaoyu Yang\textsuperscript{4},
Zengrui Jin\textsuperscript{3},
Xian Shi\textsuperscript{2},
Yuxuan Wang\textsuperscript{2},
Yu Xi\textsuperscript{2}}\\
\textbf{%
Ziyang Ma\textsuperscript{1},
Qi Chen\textsuperscript{1},
Ruiyang Xu\textsuperscript{1},
Hui Wang\textsuperscript{5},
Dongchao Yang\textsuperscript{6},
Jin Xu\textsuperscript{2}\thanks{Project leader. \textsuperscript{\ensuremath{\ddagger}}Corresponding author.}\phantom{$^{\dagger}$},
Xie Chen\textsuperscript{1,7}\footnotemark[3]}\\
\textsuperscript{1}Shanghai Jiao Tong University,
\textsuperscript{2}Alibaba Token Hub, Alibaba Group,
\textsuperscript{3}Tsinghua University\\
\textsuperscript{4}University of Cambridge,
\textsuperscript{5}Nankai University,
\textsuperscript{6}Chinese University of Hong Kong,
\textsuperscript{7}SII\\
{\hypersetup{hidelinks}\href{https://github.com/yfyeung/PrunedCTC}{\faGithub\enspace https://github.com/yfyeung/PrunedCTC}}
}

\begin{document}

\maketitle

\begin{abstract}
Connectionist temporal classification (CTC) naturally supports offline and streaming speech recognition with utterance-level supervision, but conventional implementations materialize frame-by-vocabulary activations in memory, making CTC training with native LLM vocabularies prohibitively memory-intensive. A key observation is that every valid CTC alignment uses only target tokens and blank, and their union across a batch typically forms a small subset of the full vocabulary. We introduce Pruned CTC, which restricts alignment computation to this subset while retaining full-vocabulary normalization. We prove that this vocabulary reduction is exactly equivalent to full-vocabulary CTC in loss and gradients. Head-and-loss activation memory no longer scales linearly with vocabulary size. We further apply finite-beam alignment pruning.  Building on Pruned CTC, we develop LLM-CTC, which adapts pretrained LLMs for non-autoregressive ASR while retaining causal attention and native vocabularies, and extend it to bounded-history streaming, avoiding chunk-level speech--text alignments. Experiments show that, with Zipformer-M encoder and 180K vocabulary, Pruned CTC reduces full-step memory by 5.1$\times$ with only 17\% step-time overhead. Across three corpora, it matches standard CTC accuracy. On GigaSpeech, across six Qwen3 model sizes from 0.6B to 32B, LLM-CTC remains within 7\% relative WER of LLM-CE with 7 to 10$\times$ faster recognition; when fine-tuning Qwen3-ASR for bounded-history streaming, LLM-CTC remains within 3\% relative WER of matched offline models on the test set. Together, these results establish Pruned CTC as a scalable sequence objective for native-vocabulary LLM ASR across offline and streaming settings.
\end{abstract}
\section{Introduction}
\label{sec:intro}

Automatic speech recognition (ASR) systems increasingly use pretrained large language models (LLMs) to draw on their linguistic knowledge~\citep{ma2024slam, shi2026qwen3asr}.
However, LLM-based ASR typically uses token-level cross-entropy (CE) training and autoregressive generation, which we call LLM-CE, requiring sequential decoding.
Streaming further requires coordinating text emission with incoming speech, often relying on chunk-level speech--text alignments for training~\citep{seide2024speechreallm,jia2024speechllmxl}.

Connectionist temporal classification (CTC)~\citep{graves2006ctc} provides a natural alternative, supporting offline and streaming recognition with utterance-level transcripts.
However, standard implementations materialize frame-by-vocabulary activations in memory, making CTC training costly with native LLM vocabularies.
Note that every valid CTC alignment contains only blank and the target tokens, while all other vocabulary classes contribute only through the softmax normalizer.
We therefore introduce Pruned CTC, which restricts alignment computation to the target classes present in a batch along with blank while retaining full-vocabulary normalization.
Pruned CTC's head-and-loss activation memory no longer scales linearly with vocabulary size.
We prove that this reduction is exactly equivalent to full-vocabulary CTC in loss and first-order gradients, and further apply finite-beam alignment pruning.

Making large-vocabulary CTC practical allows us to revisit how pretrained LLMs are used for ASR. 
Building on Pruned CTC, we propose LLM-CTC to adapt pretrained LLMs for non-autoregressive ASR over their native vocabularies.
This design extends to bounded-history streaming recognition under utterance-level supervision, avoiding chunk-level speech--text alignments during training.

In summary, our contributions to the community include:
\begin{itemize}[leftmargin=*, itemsep=0pt, topsep=2pt]
\vspace{-0.5em}

\item A memory-efficient CTC training method, Pruned CTC, with exact loss and first-order gradient equivalence under vocabulary reduction and head-and-loss activation memory that no longer scales linearly with vocabulary size, further combined with finite-beam alignment pruning.

\item A CTC-based LLM adaptation method, LLM-CTC, that enables non-autoregressive ASR with pretrained causal LLMs while retaining causal attention and native vocabularies.
Across six Qwen3 model sizes from 0.6B to 32B, LLM-CTC remains within 7\% relative WER of LLM-CE with 7 to 10$\times$ faster recognition.

\item A streaming extension of LLM-CTC that avoids chunk-level speech--text alignments during training and enables bounded-history inference with key-value (KV) cache reuse.

% \item A forthcoming open-source implementation suite covering Pruned CTC, LLM-CTC, and streaming LLM-CTC training and inference, to facilitate further research and applications in ASR.

\end{itemize}

\section{Related Work}
\label{sec:related}

\paragraph{Memory-efficient loss.}
Pruned RNN-T~\citep{kuang2022prunedrnnt} restricts the time--label region the full joiner evaluates while keeping the vocabulary dimension unchanged.
CoLaCTC~\citep{zhang2023colactc} coarsens the CTC label inventory for auxiliary speech-translation training.
Cut Cross-Entropy~\citep{wijmans2025cce} avoids storing full-vocabulary logits for tokenwise CE, where the target class is known at each position.
CTC lacks frame-level targets: label posteriors are computed by forward--backward over the transcript.
These posteriors are supported only on transcript labels and blank, while full-vocabulary normalization yields dense gradients.
Pruned CTC exploits this structure by separating reduced-vocabulary alignment computation from full-vocabulary normalization and gradients, using chunked recomputation to avoid materializing frame-by-vocabulary activations.
Finite-beam pruning separately limits the alignment sum.

\paragraph{CTC with large language models.}
LegoSLM~\citep{ma2025legoslm} uses speech-encoder CTC posteriors over an LLM vocabulary as weighted inputs for autoregressive decoding.
\citet{baskar2024speech} add auxiliary CTC to speech-prefix outputs under bidirectional prefix attention and evaluate CTC and autoregressive decoding.
\citet{schmitt2026llmspeech} report reduced recognition accuracy from auxiliary CTC on pretrained LLM speech-prefix outputs.
With CTC as the main objective,~\citet{syu2024ctcpmlm} fine-tune bidirectional pretrained language models and their vocabulary heads for non-autoregressive translation.
NLE~\citep{dekel2026nle} converts a pretrained causal LLM into a bidirectional editor conditioned on speech and an initial CTC hypothesis with insertion slots.
Ablations retain causal attention or replace the initial hypothesis with equal-length blank sequences.

\paragraph{Streaming LLM-based ASR.}
Speech ReaLLM~\citep{seide2024speechreallm} interleaves speech and text using external CTC alignments for CE training, including adaptation of a pretrained 7B LLM.
SpeechLLM-XL~\citep{jia2024speechllmxl} assigns transcript segments to audio chunks using alignments and limits cached history.
BESTOW~\citep{chen2024bestow} uses a fixed read/write policy, while~\citet{wan2026streamingllm} learn emission timing with monotonic chunkwise attention.

\section{Pruned CTC}
\label{sec:method}

\subsection{Preliminary: Connectionist Temporal Classification}
\label{sec:method:prelim}

\paragraph{Formulation.} Speech recognition maps a sequence of encoder frames to a sequence of target labels, typically with more frames than labels.
CTC~\citep{graves2006ctc} augments the label vocabulary with a blank symbol, yielding an output vocabulary $\sV$ of size $V = |\sV|$.
CTC assigns to a blank-free target sequence $y$ the total probability of all frame-level alignments $\pi$ that collapse to $y$.
Let $\gB$ denote the many-to-one collapse map that first merges consecutive repetitions and then removes blank symbols.
The set of length-$t$ alignments consistent with $y$ is $\gB^{-1}_t(y) = \{\pi \in \sV^{t} : \gB(\pi) = y\}$.

A batch holds $N$ utterances, with positive frame count $T_n$ and blank-free target $y_n$ for utterance $n$.
Let $T=\max_n T_n$ and $M = \sum_n T_n \le NT$.
Stacking the $M$ frames row-wise gives $\mX \in \R^{M \times D}$, with $D$ the head input width and $\sI_n$ the row indices of utterance $n$ in frame order. All class column indices start at zero, sequence positions start at one, and slices include both endpoints.
Let $\gN_+ \subseteq \{1, \dots, N\}$ index the utterances whose alignment set $\gB^{-1}_{T_n}(y_n)$ is nonempty.
A vocabulary head, an affine map with weight $\mW \in \R^{V \times D}$ and bias $\vb \in \R^{V}$, turns them into logits $\mZ = \mX \mW^{\top} + \vone_M \vb^{\top} \in \R^{M \times V}$, where $\vone_M$ is the all-ones vector of length $M$, so that one column of $\mZ$ is one class. A $\softmax$ normalizes row $m$ into a per-frame class distribution, or in log space
\begin{equation}
  \log p_v(m) = \emZ_{m,v} - \evlambda_m ,
  \qquad
  \evlambda_m = \log \sum_{u \in \sV} \exp \emZ_{m,u} .
  \label{eq:lognorm}
\end{equation}
We call $\evlambda_m$ the normalizer, which depends on all class scores at frame $m$.
CTC assumes conditionally independent frame-level labels given the input sequence, so the batch loss is
\vspace{-0.25em}
\begin{equation}
  \Ls = -\sum_{n \in \gN_+} \log \sum_{\pi \in \gB^{-1}_{T_n}(y_n)} \prod_{t=1}^{T_n} p_{\pi_t}(\sI_n[t]) .
  \label{eq:ctc}
\vspace{-0.5em}
\end{equation}

\paragraph{Discussion.} A standard implementation evaluates \Eqref{eq:ctc} by materializing $\mZ$, normalizing it along the class axis, and passing the log-probabilities to a dynamic program. A trainable vocabulary head requires $\Theta(VD)$ storage for parameters, gradients and optimizer state, and $\Theta(MVD)$ projection arithmetic. Full-vocabulary logits and their gradients require additional activation storage.
\begin{itemize}[leftmargin=*, itemsep=0pt, topsep=2pt]
\vspace{-0.25em}
  \item The forward and backward passes store several $M \times V$ arrays. At large $V$, these arrays can exceed the encoder's activation storage and dominate training memory.
  \item The dynamic program uses only blank and the labels appearing in the transcripts of the batch. When these classes make up a small fraction of the vocabulary, most columns are materialized solely for the normalizer $\evlambda_m$.
  \item Removing classes with positive probability mass changes the normalizer and hence the CTC objective. The reduced input must retain full-vocabulary normalization.
\end{itemize}

\subsection{Pruned CTC via vocabulary reduction}
\label{sec:method:exact}

\paragraph{Reformulation.} We propose \emph{Pruned CTC}, a memory-efficient CTC training method that separates alignment computation for the target transcript from normalization over the full vocabulary.
The selected classes $\sU_n$ are blank and the distinct labels of $y_n$.
For $n\in\gN_+$, all valid alignments visit each frame once and share the same normalization factors, giving the utterance loss $\Ls_n$:
\begin{equation}
  \Ls_n = \sum_{t=1}^{T_n}\log\sum_{v\in\sV}\exp\emZ_{\sI_n[t],v}
  - \log\sum_{\pi\in\gB^{-1}_{T_n}(y_n)}
    \exp\!\left(\sum_{t=1}^{T_n}\emZ_{\sI_n[t],\pi_t}\right).
  \label{eq:ctc-factorization}
\end{equation}
The first term normalizes over the entire vocabulary. Every valid alignment satisfies $\pi_t\in\sU_n$, so the second term uses only logits for these classes. Classes outside $\sU_n$ still affect the loss and receive gradients through the first term. Vocabulary reduction, chunked projection and dense gradient reconstruction evaluate \Eqref{eq:ctc} without retaining the full frame-by-vocabulary logit matrix.

Let $\sU = \bigcup_n \sU_n$ be the selected classes for the batch, with $K = |\sU|$, and $\sO = \sV \setminus \sU$ its complement. To preserve the normalization term in \Eqref{eq:ctc-factorization}, we pool the nonblank classes in $\sO$ into the \emph{other class}.
Order $\sU$ with blank first and relabel each target token by its index in $\sU$.
When $\sO$ is nonempty, define $\mR \in \R^{M \times (K+1)}$ by
\begin{equation}
  \emR_{m,j} = \emZ_{m,\, \sU[j]} \quad (j = 0, \dots, K-1),
  \qquad
  \emR_{m,K} = \log \sum_{v \in \sO} \exp \emZ_{m,v},
  \label{eq:grouped}
\end{equation}
whose first and last columns represent blank and the complement's mass, respectively.
When $\sO$ is empty, the grouped logits $\mR$ have only the $K$ columns indexed by $\sU$.
The other class is nonblank and absent from every target, so its symbol survives collapsing and cannot appear in valid alignments.
Pooling preserves the normalizer, $\log\sum_j\exp\emR_{m,j}=\evlambda_m$, so each class in $\sU$ keeps its probability.

\begin{proposition}
\label{prop:exact}
In exact arithmetic, CTC on $\log\softmax(\mR)$ with relabeled targets and all valid alignments, summed over the same utterances in $\gN_+$, equals $\Ls$ as a function of $\mZ$. The reduced and full-vocabulary losses therefore have identical first-order gradients with respect to $\mZ$, $\mX$, $\mW$ and $\vb$.
\end{proposition}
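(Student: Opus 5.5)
The plan is to prove equality of the losses pointwise as functions of $\mZ$, utterance by utterance. The gradient claim then follows by the chain rule, because $\mR$ is itself a differentiable function of $\mZ$.

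Fix $n\in\gN_+$ and write $\phi$ for the relabeling bijection from $\sU$ to $\{0,\dots,K-1\}$. Blank goes to $0$. When $\sO\neq\emptyset$, the other class gets index $K$. Let $\tilde y_n=\phi(y_n)$, applied elementwise; it is blank-free because $y_n$ is. The first step is to show that $\phi$, applied framewise, is a bijection from $\gB^{-1}_{T_n}(y_n)$ onto the set of valid reduced alignments $\tilde\gB^{-1}_{T_n}(\tilde y_n)$, where $\tilde\gB$ is the collapse map on the reduced alphabet.
\begin{itemize}
\item Every full alignment $\pi$ uses only blank and labels of $y_n$, so $\pi_t\in\sU_n\subseteq\sU$.
\item Collapsing commutes with an injective relabeling that fixes blank. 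Hence $\tilde\gB(\phi(\pi))=\phi(\gB(\pi))=\tilde y_n$.
\item Conversely, a reduced alignment $\tilde\pi$ with $\tilde\gB(\tilde\pi)=\tilde y_n$ cannot contain index $K$. That index is nonblank, so it would survive collapsing, yet it does not occur in $\tilde y_n$. Therefore $\tilde\pi$ takes values in $\{0,\dots,K-1\}$, and $\phi^{-1}(\tilde\pi)$ is a valid full alignment.
\end{itemize}
In particular, the reduced alignment set is nonempty exactly when $n\in\gN_+$, so both losses sum over the same utterances.

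The second step compares per-frame probabilities. The pooled column satisfies $\log\sum_j\exp\emR_{m,j}=\evlambda_m$, since $\sU$ and $\sO$ partition $\sV$ and $\emR_{m,K}$ is the log-sum-exp over $\sO$; when $\sO=\emptyset$ this is immediate. Hence for $j<K$,
\[
\log\softmax(\mR)_{m,j}=\emZ_{m,\sU[j]}-\evlambda_m=\log p_{\sU[j]}(m).
\]
Substituting this into \Eqref{eq:ctc} through the bijection, each alignment term $\prod_t p_{\pi_t}(\sI_n[t])$ equals the corresponding reduced product. The inner sums therefore coincide, and so do the utterance losses and the batch losses. The equality holds for every $\mZ$, which gives equality of the two losses as functions of $\mZ$.

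For gradients, write $\tilde\Ls(\mR(\mZ))=\Ls(\mZ)$ identically in $\mZ$. The map $\mZ\mapsto\mR$ is smooth: it is a column selection plus a log-sum-exp, which is smooth even though some entries of $\mZ$ may be very negative. Differentiating both sides then gives identical gradients with respect to $\mZ$. Because $\mZ=\mX\mW^\top+\vone_M\vb^\top$ is shared by both losses, the gradients with respect to $\mX$, $\mW$ and $\vb$ also agree by the chain rule.

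The main subtlety is the bijection step. One must check that the other class can never appear in a valid reduced alignment, including in boundary cases: $\sO$ empty, and targets with repeated labels, where the collapse rules on both sides must match exactly. Relabeling preserves equality between consecutive symbols, so merge behaviour is identical on both sides, and that settles it.
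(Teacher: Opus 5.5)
Your proposal is correct and follows the paper's proof. The relabeling bijection between full and reduced valid alignments excludes the pooled other class because it is nonblank and would survive collapsing; the pooled column preserves the normalizer, so every valid alignment keeps its probability, the losses agree as functions of $\mZ$, and the gradients agree by the chain rule. The paper additionally carries out an explicit Step~3, differentiating through log-softmax and the log-sum-exp over $\sO$ to recover the logit gradient $p_v(m)-\gamma_v(m)$ on every column. That step is not needed for the stated claim: your function-identity argument suffices, given that each alignment sum is positive for $n\in\gN_+$ with finite logits, which the paper states explicitly and you leave implicit.
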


We call this computation with vocabulary reduction and all valid alignments \emph{unpruned reduced CTC}.

For utterance $n\in\gN_+$, the occupancy $\gamma_v(m)$ is the posterior probability of label $v$ at frame $m$, conditioned on $y_n$.
\Appref{app:proof} proves the proposition and derives the logit gradient:
\begin{equation}
  \frac{\partial \Ls}{\partial \emZ_{m,v}} = p_v(m) - \gamma_v(m)
  \qquad \text{for all } v \in \sV, \ m \in \sI_n, \ n \in \gN_+ .
  \label{eq:grad-full}
\end{equation}
Labels outside $\sU_n$ have zero occupancy and retain gradient $p_v(m)$, so the head gradient remains dense over the vocabulary. Frames of an utterance outside $\gN_+$ have zero gradient.

\subsection{Finite-beam approximation}
\label{sec:method:beam}

Our implementation additionally prunes the alignment lattice using k2~\citep{k2}.
The alignment beam $\beta$ sets a log-score tolerance relative to the best alignment.
\Appref{app:impl:beam} details this pruning rule and bounds the error in logit gradients.
Let $\gP_{\beta,n}\subseteq\gB^{-1}_{T_n}(y_n)$ be the retained alignments for utterance $n$.
When every utterance in $\gN_+$ has a nonempty retained lattice, we denote its loss by $\Ls_{\beta,n}$ and use the batch objective
\begin{equation}
  \Ls_\beta = -\sum_{n\in\gN_+}\log
  \sum_{\pi\in\gP_{\beta,n}}\prod_{t=1}^{T_n}p_{\pi_t}(\sI_n[t]).
  \label{eq:pruned-ctc}
\end{equation}
Vocabulary reduction preserves every retained alignment's probability. Restricting the alignment sum gives $\Ls_\beta\geq\Ls$, with equality when all valid alignments are retained.
With the retained lattice locally fixed, frames $m\in\sI_n$ for $n\in\gN_+$ have logit gradient $p_v(m)-\gamma_v^\beta(m)$, where $\gamma_v^\beta(m)$ is the posterior occupancy over retained alignments.
Excluded utterances have zero gradient.
Pruning boundaries can introduce discontinuities in the loss.

All Pruned CTC training runs use an alignment beam of 100.
Across random initializations of encoder-based and LLM-based ASR in \Appref{app:impl:beam}, the estimated discarded alignment posterior mass does not exceed $10^{-11}$ per utterance at this beam.

\subsection{Forward and backward computation}
\label{sec:method:alg}

\noindent
\begin{minipage}[t]{0.474\textwidth}
\vspace{0pt}
\parskip=.5pc
\Algref{alg:pruned-ctc} projects vocabulary chunks of at most $C$ columns to compute $\Ls_\beta$ when every utterance in $\gN_+$ has a finite retained-lattice loss.
Online log-sum-exp updates $\vlambda\in\R^M$, one full-vocabulary normalizer per frame.
Each chunk is released after use.
Projecting onto the selected classes and subtracting $\vlambda$ gives their log-probabilities $\mS=(\log\softmax(\mR))_{:,0:K-1}$.
The complement contributes through $\vlambda$, so the dynamic program uses only these $K$ columns.
The backward pass recomputes each chunk, forms gradients, and releases it.
\Appref{app:impl:backend} details target relabeling and padding.

\paragraph{Recomputed dense gradient.} With $\mG = \partial \Ls_\beta / \partial \mS$ from the dynamic program and row sums $\evg_m = \sum_{j<K} \emG_{m,j}$,
\begin{equation}
  \frac{\partial \Ls_\beta}{\partial \emZ_{m,v}}
  = \operatorname{scatter}_{\sU}(\mG)_{m,v} - \evg_m\,p_v(m).
  \label{eq:vjp}
\end{equation}
\end{minipage}
\hfill
\begin{minipage}[t]{0.51\textwidth}
\setlength{\intextsep}{0pt}
\begin{algorithm}[H]
\caption{Pruned CTC}
\label{alg:pruned-ctc}
\small
\begin{algorithmic}[1]
  \STATE \textbf{input:} padded head-input states
         $\tX \in \R^{N \times T \times D}$,\\
         targets $y_{1:N}$, frame counts $T_{1:N}$, head $(\mW, \vb)$,\\
         chunk width $C$, alignment beam $\beta$
  \STATE $\sU \gets$ blank followed by sorted distinct labels in $y_{1:N}$%
  \STATE $K \gets |\sU|$, \ $\tilde y_{1:N} \gets y_{1:N}$ relabeled by $\sU$
  \STATE $\mX \gets$ the $M = \sum_n T_n$ valid frames of $\tX$
  \STATE $\vlambda \gets$ online log-sum-exp over
         vocabulary chunks $\mZ_{:,\, v_0:v_1-1}$
  \STATE $\mS \gets \mX \mW_{\sU}^{\top} + \vone_M \vb_{\sU}^{\top}
         - \vlambda \vone_K^{\top}$
  \STATE $\gA \gets \textsc{Intersect}(\textsc{CtcGraph}(\tilde y_{1:N}), \mS, \beta)$
  \STATE $\Ls_{\beta,1:N} \gets -\textsc{TotalScores}(\gA;\,\text{log semiring})$
  \STATE zero nonfinite entries of $\Ls_{\beta,1:N}$
  \STATE \textbf{output:} $\Ls_\beta = \sum_n \Ls_{\beta,n}$ and
         $\mG = \partial \Ls_\beta / \partial \mS$
  \STATE \textbf{backward:} $\evg_m \gets \sum_{j<K} \emG_{m,j}$
  \FOR{each vocabulary chunk $v_0:v_1-1$}
    \STATE recompute $\mZ_{:,\, v_0:v_1-1}$
    \STATE form \Eqref{eq:vjp} for this chunk
    \STATE accumulate into $\partial \Ls_\beta / \partial \mX$
    \STATE write its rows of $\partial \Ls_\beta / \partial \mW$ and
           $\partial \Ls_\beta / \partial \vb$
  \ENDFOR
\end{algorithmic}
\end{algorithm}
\end{minipage}

The first term places $\mG$ in columns $\sU$ and zeros elsewhere, giving at most $K$ nonzero columns.
The second uses the full-vocabulary probabilities of \Eqref{eq:lognorm} and supplies normalization gradients to all columns, including the complement.
For a unit-weight loss on a nonempty retained lattice, $\emG_{m,j}=-\gamma_{\sU[j]}^\beta(m)$ and $-\evg_m=1$.
With all valid alignments retained, \Eqref{eq:vjp} recovers \Eqref{eq:grad-full}.
The actual row sums also account for loss scaling and excluded utterances, as described in \Appref{app:impl:grad}.
Recomputing and consuming one dense chunk at a time reduces class-axis activation storage from $\Theta(MV)$ to $\Theta(MK + M)$ plus a temporary block of at most $M \times C$ entries.
Storage for the dynamic program also depends on sequence lengths and the alignment beam.

When the two gradient terms nearly cancel for the selected classes, rounding errors can dominate their difference. \Appref{app:impl:precision} compares precision strategies and quantifies loss and gradient errors across vocabulary and batch sizes.

\section{LLM-CTC}
\label{sec:ctcllm}

\subsection{Preliminary: LLM-based ASR}
\label{sec:ctcllm:prelim}

\paragraph{Formulation.} A speech encoder and projector map utterance $n$ to $\mE_n\in\R^{T_n\times D}$. A pretrained decoder-only language model receives these speech embeddings and a $P$-token prompt $\mP\in\R^{P\times D}$. The model's causal transformer backbone $f$, including the final normalization, produces one $D$-dimensional state per position.
The pretrained vocabulary projection, or LLM head, uses the weight $\mW$ from \Secref{sec:method:prelim} and a pretrained bias $\vb^{\mathrm{pre}}\in\R^V$. For autoregressive recognition, $\vb=\vb^{\mathrm{pre}}$.

Adding the end-of-sequence token gives $y_n^+=(y_n,\mathrm{EOS})$.
Autoregressive (AR) training minimizes
\begin{equation}
  \Ls_{\mathrm{CE}} =
  -\sum_n \sum_{i=1}^{|y_n|+1}
    \log \pmodel\!\left(y^+_{n,i}\mid\mP,\mE_n,y^+_{n,<i}\right).
  \label{eq:ce}
\end{equation}
LLM-CE generates one token per sequential LLM forward pass, conditioned on preceding text, and reuses a KV cache across forward passes.

\begin{figure}[t]
\centering
\includegraphics[width=\linewidth]{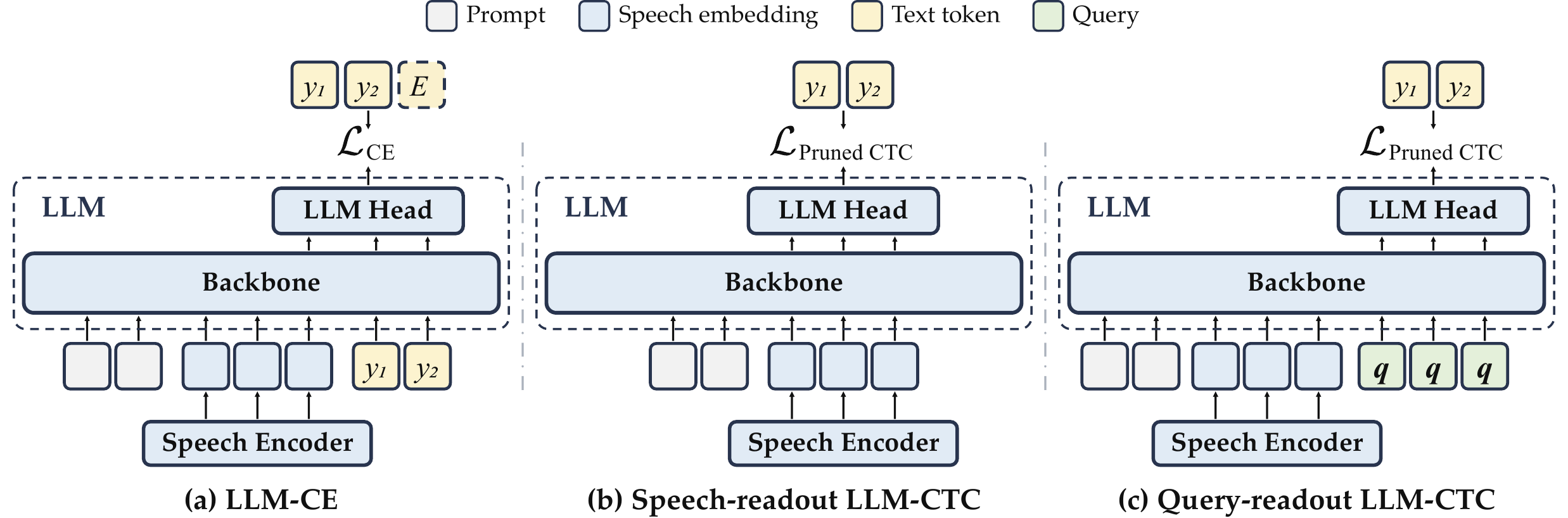}
\caption{Offline LLM-based ASR under causal attention.
\captiona{}~LLM-CE predicts transcript tokens and EOS ($E$).
\captionb{}~Speech-readout LLM-CTC reads speech positions with prefix access.
\captionc{}~Query-readout LLM-CTC reads appended queries with access to all speech embeddings.
$\Ls_{\mathrm{CE}}$ is cross-entropy, and $\Ls_{\text{Pruned CTC}}=\Ls_\beta$ is finite-beam CTC with alignment beam $\beta$.}
\label{fig:ctc-llm-arch}
\end{figure}

\subsection{Offline LLM-CTC}
\label{sec:ctcllm:readouts}
\label{sec:ctcllm:query}
\label{sec:ctcllm:context}

\paragraph{Formulation.} LLM-CTC adapts pretrained LLMs for direct, non-autoregressive speech recognition over their native vocabularies using Pruned CTC. Offline systems process the entire utterance before decoding. We retain causal attention to preserve the pretrained attention structure and facilitate streaming with KV cache reuse.
\Figref{fig:ctc-llm-arch} compares LLM-CE with the two LLM-CTC readouts, which select different LLM positions for CTC prediction.

\emph{Speech-readout LLM-CTC} processes the prompt followed by speech embeddings to obtain $\mH_n^{\mathrm{s}}=f([\mP^\top\,\mE_n^\top]^\top)$ and selects $\mX_{\sI_n,:}=(\mH_n^{\mathrm{s}})_{P+1:P+T_n,:}$ as the CTC states.
Each speech position can attend to the prompt and the speech positions up to itself.

\emph{Query-readout LLM-CTC} appends $T_n$ copies of one learned vector $\vq\in\R^D$ after the speech embeddings, computes $\mH_n^{\mathrm{q}}=f([\mP^\top\,\mE_n^\top\,\vq\vone_{T_n}^\top]^\top)$ and selects $\mX_{\sI_n,:}=(\mH_n^{\mathrm{q}})_{P+T_n+1:P+2T_n,:}$ as the CTC states.
The copies occupy distinct sequence positions. Every query can attend to the prompt, all speech embeddings and query positions up to itself.

Both variants provide $T_n$ CTC states for utterance $n$, stacked into the batch matrix $\mX$.
An output class $v_{\mathrm{bl}}$ unused by the tokenizer serves as blank, preserving the text vocabulary.
Since the LLM head is frozen, we learn a blank offset $\eta$ to adjust the blank logit: $\vb=\vb^{\mathrm{pre}}+\eta\ve_{v_{\mathrm{bl}}}$, where $\ve_{v_{\mathrm{bl}}}\in\R^V$ is the standard basis vector for blank.
Full-vocabulary normalization of the LLM head logits gives $p_v(m)$, including blank, as defined in \Eqref{eq:lognorm}. Both variants minimize the finite-beam objective $\Ls_\beta$ in \Eqref{eq:pruned-ctc} using these probabilities.

\paragraph{Discussion.} Both variants support non-autoregressive (NAR) recognition, producing $T_n$ CTC frames in $\Theta(1)$ LLM forward passes and admitting the same valid alignments for a given transcript.
\begin{itemize}[leftmargin=*, itemsep=0pt, topsep=2pt]
\item Speech embeddings already carry encoder context. Within the LLM, Query-readout LLM-CTC gives every CTC prediction position direct access to the speech embeddings of the entire utterance.
Speech-readout LLM-CTC restricts this access to a growing speech prefix.
\item Query-readout LLM-CTC adds $D$ parameters and increases the LLM input length from $P+T_n$ to $P+2T_n$, increasing tokenwise and attention work in the same LLM layers. \Appref{app:readout:cost} gives the corresponding cost analysis.
\end{itemize}

\subsection{Streaming LLM-CTC}
\label{sec:ctcllm:streaming}

\begin{figure}[t]
\centering
\includegraphics[width=\linewidth]{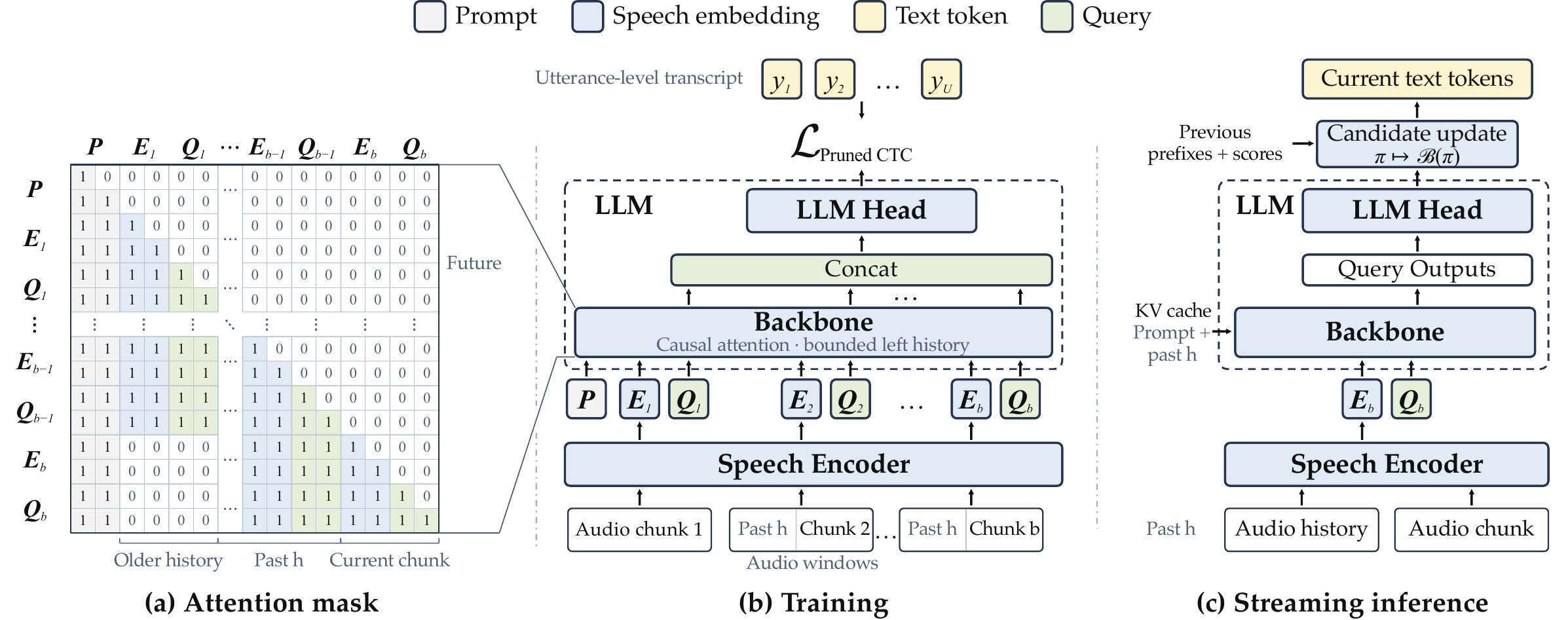}
\caption{Streaming Query-readout LLM-CTC.
\captiona{}~Chunk positions attend to prompt $\mP$, $h$ seconds of speech/query history before the chunk, and their causal prefixes within the chunk.
\captionb{}~Speech blocks $\mE_b$ alternate with shared-query blocks $\mQ_b$.
Concatenated query outputs are trained against the entire $U$-token transcript using $\Ls_{\text{Pruned CTC}}=\Ls_\beta$.
\captionc{}~Each chunk uses one LLM forward pass with a KV cache, while CTC prefixes and scores persist across chunks.}
\label{fig:streaming-readout}
\end{figure}

\paragraph{Formulation.} \emph{Streaming Query-readout LLM-CTC} extends the causal query readout by processing utterance $n$ in $B_n$ chunks. For chunk $b$, the encoder uses $h$ seconds of left history and returns only current-chunk outputs. The projector maps them to speech embeddings $\mE_{n,b}\in\R^{T_{n,b}\times D}$. We append the shared-query block $\mQ_{n,b}=\vone_{T_{n,b}}\vq^\top$.
Let $f_h$ denote $f$ with the mask in \Figref{fig:streaming-readout}\captiona{} at every LLM layer.
A position in chunk $b$ can attend to the prompt, speech/query positions from the $h$ seconds before that chunk, and positions up to itself within the chunk.
The LLM outputs $\mH_n^{\mathrm{str}} = f_h([\mP^\top\,\mE_{n,1}^\top\,\mQ_{n,1}^\top\,\cdots\, \mE_{n,B_n}^\top\,\mQ_{n,B_n}^\top]^\top)$.
Each CTC prediction position can directly attend to the entire current speech block.
Feature extraction and encoding exclude audio from later chunks.

\paragraph{Training.} We sample the left-history length $h$ per microbatch as specified in \Appref{app:config:qwen-asr} and use it for the speech encoder and LLM attention mask.
With $T_n = \sum_{b=1}^{B_n} T_{n,b}$ and sequence offset $o_{n,b} = P + 2\sum_{c=1}^{b-1}T_{n,c}$, we concatenate query outputs in chunk order as in \Figref{fig:streaming-readout}\captionb{}:
\begin{equation}
  \mX_{\sI_n,:}
    = \operatorname{Concat}_{b=1}^{B_n}
    (\mH_n^{\mathrm{str}})_{o_{n,b}+T_{n,b}+1:o_{n,b}+2T_{n,b},:} .
  \label{eq:streaming-readout}
\end{equation}
The frozen LLM head and blank offset $\eta$ produce frame probabilities for $\Ls_\beta$.
Applied once to the entire transcript, this loss sums over retained alignments across chunks, avoiding chunk-level speech--text alignments.
Training can process all chunks in parallel under this attention rule.

\paragraph{Inference.} Each chunk produces its query emissions in one LLM forward pass, as in \Figref{fig:streaming-readout}\captionc{}.
The speech encoder retains at most $h$ seconds of left history between chunks.
The LLM reuses the KV cache for the prompt and retained speech/query positions, discarding older entries.
CTC prefix beam search carries candidate prefixes and scores across chunks.
The best CTC hypothesis may change as new emissions arrive, while earlier LLM emissions remain fixed.
\Appref{app:readout:streaming} proves the equivalence of parallel batch computation and per-utterance cached computation for the LLM in exact arithmetic.
\Appref{app:readout:consistency} quantifies their finite-precision differences.

\section{Experiments}
\label{sec:exp}

\subsection{Memory and runtime}
\label{sec:exp:scaling}

\paragraph{Encoder-based ASR.} We benchmark Pruned CTC with a Zipformer-M encoder on GigaSpeech using one H100 80G GPU. Standard CTC uses a dense vocabulary head followed by PyTorch's CTC loss.
We vary vocabulary size from 500 to 180,000 classes and total batch frames $M$ over 16,000, 32,000 and 64,000, retokenizing transcripts for each vocabulary. Both methods process the same batches.
We measure forward and backward through the vocabulary head and CTC loss, and full training steps including the speech encoder and optimizer.
\Appref{app:config:bench} gives the settings.

\begin{figure}[t]
\centering
\begin{minipage}[t]{0.55\textwidth}
\vspace{0pt}
\centering
\includegraphics[width=\linewidth]{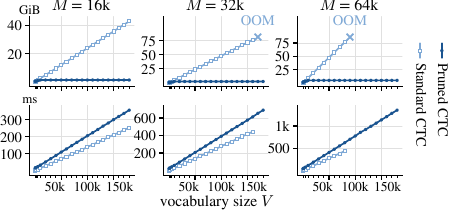}
\captionsetup{position=bottom,margin=0pt,skip=2pt}
\caption{Vocabulary head and CTC loss memory (top) and time (bottom) for forward and backward passes.
Crosses: extrapolated out-of-memory (OOM) points.}
\label{fig:vocab-scaling}
\end{minipage}
\hfill
\begin{minipage}[t]{0.43\textwidth}
\vspace{0pt}
\centering
\captionsetup{type=table,position=top,margin=0pt}
\caption{Full training steps for each batch size at the largest tested vocabulary feasible for standard CTC.
Memory is in GiB.
Memory reductions and runtime overheads are relative to standard CTC.}
\label{tab:whole-step}
\footnotesize
\setlength{\tabcolsep}{1.2pt}
\renewcommand{\arraystretch}{1.12}
\begin{tabular}{@{}rrcccc@{}}
\toprule
\multirow{2}{*}{$M$} & \multirow{2}{*}{$V$}
& \multicolumn{3}{c}{Memory}
& \multirow{2}{*}{\makecell{Runtime\\overhead}} \\
\cmidrule(lr){3-5}
& & Standard & Pruned & Reduction & \\
\midrule
16k & 180k & 51.7 & 10.1 & 5.1$\times$ & +17\% \\
32k & 130k & 77.3 & 17.9 & 4.3$\times$ & +17\% \\
64k & 50k & 76.2 & 33.8 & 2.3$\times$ & +14\% \\
\bottomrule
\end{tabular}
\end{minipage}
\end{figure}

\begin{itemize}[leftmargin=*, itemsep=0pt, topsep=2pt]
\item \textbf{Head and loss.} \Figref{fig:vocab-scaling} shows approximately constant memory for Pruned CTC over 10,000 to 180,000 classes at each batch size, while standard CTC's memory grows linearly with vocabulary size.
At $M=16{,}000$ and $V=180{,}000$, Pruned CTC reduces memory from 43.0 to 1.29 GiB, a factor of 33.3. At the same $V$, it uses 2.58 and 5.16 GiB for $M=32{,}000$ and $M=64{,}000$, respectively.
Memory savings begin between 2,000 and 5,000 classes at all three batch sizes. Pruned CTC completes all tested head-and-loss configurations, including those where standard CTC runs out of memory.
Pruned CTC's runtime ratio to standard CTC generally decreases with vocabulary size. At $M=16{,}000$, the ratio is 9.14 at 500 classes and 1.41 at 180,000 classes.

\item \textbf{Full training steps.} \Tabref{tab:whole-step} uses the largest tested vocabulary feasible for standard CTC at each batch size. Pruned CTC reduces memory by a factor of 2.3 to 5.1, with 14 to 17\% more time per step.
At $V = 180{,}000$, Pruned CTC also completes full steps at $M = 32{,}000$ and $64{,}000$, using 18.2 and 34.5 GiB, respectively.
Standard CTC runs out of memory in both settings.
\end{itemize}

\paragraph{LLM-based ASR.} We profile the 8B and 14B LLM-CTC models in \Secref{sec:exp:llm} on one A100 80G GPU, using LoRA and frozen native LLM heads with $V=151{,}936$.
We compare standard CTC, unpruned reduced CTC and Pruned CTC under the settings in \Appref{app:config:bench:llm}.

\begin{itemize}[leftmargin=*, itemsep=0pt, topsep=2pt]
\item \textbf{Head and loss.} Memory is the peak increase above resident head parameters and inputs during forward and backward.
Relative to standard CTC, Pruned CTC reduces this from 29.21 to 1.14 GiB for 8B and from 29.79 to 1.20 GiB for 14B, by factors of 25.7 and 24.8, respectively.

\item \textbf{Full training steps.} Including resident model and optimizer storage, Pruned CTC reduces peak memory from 56.06 to 34.67 GiB for 8B and from 72.71 to 52.84 GiB for 14B relative to standard CTC.
These are reductions of 38.2\% and 27.3\%, with 7.5\% and 5.6\% more time per step, respectively. Storage and activations outside the head and loss limit the full-step reductions.
Unpruned reduced CTC achieves nearly the same full-step memory and runtime as Pruned CTC. Exact vocabulary reduction and recomputation provide most of the measured memory savings.
\end{itemize}

\FloatBarrier
\subsection{Encoder-based ASR}
\label{sec:exp:nn}

\paragraph{Setup.} 
We compare Pruned CTC with standard CTC on three Mandarin and English corpora.
LibriSpeech~\citep{panayotov2015librispeech} is 960 hours of read English, GigaSpeech~\citep{chen2021gigaspeech} is 10,000 hours of read and spontaneous English speech, and AISHELL-1~\citep{bu2017aishell} is 170 hours of read Mandarin. LibriSpeech and GigaSpeech use byte-pair encoding (BPE)~\citep{kudo2018sentencepiece} with 500-class vocabularies.
AISHELL-1 uses a 4336-class character vocabulary. We use a Zipformer-M encoder~\citep{yao2024zipformer} and follow the standard CTC configurations in CR-CTC~\citep{yao2025crctc}, replacing the loss with Pruned CTC.
Decoding uses prefix beam search with beam size 4.
We report word error rate (WER) for English and character error rate (CER) for Mandarin.
\Appref{app:config:nn} details training configurations.

\begin{table}[H]
\centering
\caption{ASR performance comparison between standard CTC and Pruned CTC with a Zipformer-M encoder: WER (\%) on LibriSpeech and GigaSpeech, and CER (\%) on AISHELL-1.}
\label{tab:wer}
\begin{tabular}{lcccccc}
\toprule[1pt]
\multirow{2}{*}{Method}
      & \multicolumn{2}{c}{LibriSpeech} & \multicolumn{2}{c}{GigaSpeech} & \multicolumn{2}{c}{AISHELL-1} \\
\cmidrule(lr){2-3} \cmidrule(lr){4-5} \cmidrule(lr){6-7}
      & \textit{test-clean} & \textit{test-other} & \textit{dev} & \textit{test} & \textit{dev} & \textit{test} \\
\midrule
Standard CTC~\citep{yao2025crctc} & 2.52 & 6.02 & 11.23 & 11.27 & 4.47 & 4.80 \\
Pruned CTC (ours)                 & 2.52 & 6.02 & 11.25 & 11.24 & 4.49 & 4.79 \\
\bottomrule[1pt]
\end{tabular}
\vspace{-1em}
\end{table}

\paragraph{Results.} Pruned CTC yields nearly identical WER and CER to standard CTC across all three corpora in \Tabref{tab:wer}. These results show that Pruned CTC with finite-beam alignment pruning can replace standard CTC in the evaluated ASR recipes without sacrificing recognition accuracy.

\subsection{Offline LLM-based ASR}
\label{sec:exp:llm}

\paragraph{Setup.}
On GigaSpeech, we pair a frozen SPEAR-XLarge v2 speech encoder~\citep{yang2026spear} with Qwen3 LLMs~\citep{yang2025qwen3} and refer to this model family as SPEAR-Qwen3.
We evaluate Qwen3 backbones at 0.6B, 1.7B, 4B, 8B, 14B, and 32B parameters.
Paired LLM-CE and LLM-CTC models share the encoder and Qwen3 size.
WER uses beam search for LLM-CE and prefix beam search for LLM-CTC, both with beam size 4.
Real-time factor (RTF) is total timed recognition time divided by total audio duration.
We measure it on the trained checkpoints used for WER, with greedy decoding on one H100 80G GPU.
\Appref{app:config:rtf} details the timing protocol.
\Appref{app:config:llm} gives shared training settings, and \Appref{app:config:spear} specifies the models.
Additional baselines include Whisper~\citep{radford2022whisper}, Qwen2-Audio~\citep{chu2024qwen2audio}, and Qwen3-ASR~\citep{shi2026qwen3asr}, all decoded with beam size 4 as detailed in \Appref{app:config:baselines}.

\begingroup
\setlength{\intextsep}{0pt}
\setlength{\columnsep}{10pt}
\begin{wraptable}[11]{r}{0.42\textwidth}
\vspace{-1.4pt}
\begin{minipage}[t]{\linewidth}
\vspace{0pt}
\centering
\setlength{\tabcolsep}{2.8pt}
\captionsetup{position=top,skip=4pt}
\caption{Readout and attention ablations on GigaSpeech, using SPEAR-\allowbreak Qwen3-\allowbreak 4B.
WER (\%) and RTF.}
\label{tab:llm-readout}
\begin{tabular}{@{}llccc@{}}
\toprule
\multirow{2}{*}{Readout} & \multirow{2}{*}{Attention} & \multicolumn{2}{c}{GigaSpeech} & \multirow{2}{*}{RTF} \\
\cmidrule(lr){3-4}
& & \textit{dev} & \textit{test} & \\
\midrule
Speech & Causal & 10.24 & 10.46 & 0.0108 \\
Speech & Full & 10.01 & 10.16 & 0.0114 \\
Query & Causal & 10.05 & 10.21 & 0.0109 \\
\bottomrule
\end{tabular}
\end{minipage}
\end{wraptable}

\paragraph{Readout and attention.}
\Tabref{tab:llm-readout} compares SPEAR-\allowbreak Qwen3-\allowbreak 4B readouts under matched initialization, training and decoding settings, with one CTC frame per speech embedding.
Under causal attention, Query-readout LLM-CTC has lower WER on both splits at similar RTF. The full-attention ablation makes prompt and speech positions mutually visible during training and decoding.
Speech-readout LLM-CTC with full attention gives slightly lower WER at higher RTF, but bidirectional attention over the entire utterance must become causal across chunks for streaming with KV cache reuse. We use Query-readout LLM-CTC to access the entire utterance offline or the current speech block in streaming with causal attention in the LLM.
\par
\endgroup

\paragraph{Accuracy and speed.}
\Tabref{tab:llm-systems} shows that Query-readout LLM-CTC benefits from larger Qwen3 backbones.
With the speech encoder and native LLM head frozen, test WER decreases from 10.57\% at 0.6B to 9.94\% at 32B, while remaining within 7\% relative WER of LLM-CE across all six sizes.
The 4B, 8B, 14B and 32B LLM-CTC systems also achieve lower test WER than SPEAR-XLarge CTC and RNN-T, suggesting that non-autoregressive ASR can benefit from linguistic knowledge in pretrained LLMs.
LLM-CE also provides a strong autoregressive baseline, achieving test WERs of 9.43\% and 9.41\% at 14B and 32B, respectively.
The 14B and 32B LLM-CE models outperform the strong baseline Qwen3-ASR-1.7B, whose audio encoder was pretrained on approximately 40M hours of ASR data~\citep{shi2026qwen3asr}.
LLM-CTC reduces RTF by 7.2 to 10.3$\times$ relative to LLM-CE, reflecting the removal of sequential AR generation.

\begin{table}[t]
\centering
\caption{ASR performance comparison between LLM-CE and Query-readout LLM-CTC using SPEAR-Qwen3 on GigaSpeech. WER (\%) and RTF are reported.
Gray baselines use ASR training data not restricted to GigaSpeech.
WER uses official GigaSpeech normalization, with num2words(en) and punctuation removal first applied to hypotheses from gray baselines.}
\label{tab:llm-systems}
\setlength{\tabcolsep}{2.8pt}
\begin{tabular}{@{}l*{2}{>{\centering\arraybackslash}p{2.4em}}c@{\hspace{10pt}}*{2}{>{\centering\arraybackslash}p{2.4em}}c@{\hspace{18pt}}l*{2}{>{\centering\arraybackslash}p{2.4em}}@{}}
\cmidrule[\heavyrulewidth](r{18pt}){1-7}\cmidrule[\heavyrulewidth]{8-10}
& \multicolumn{3}{@{\hspace{\cmidrulekern}}c@{\hspace{\cmidrulekern}}}{\textbf{\textit{LLM-CE}}}
& \multicolumn{3}{@{\hspace{\cmidrulekern}}c@{\hspace{18pt}}}{\textbf{\textit{LLM-CTC}}}
& \multicolumn{3}{@{}c@{}}{\textbf{\textit{Baselines}}} \\
\cmidrule(lr){2-4}\cmidrule(l{5pt}r{18pt}){5-7}\cmidrule{8-10}
\multirow{2}{*}{Size} & \multicolumn{2}{c}{GigaSpeech} & \multirow{2}{*}{RTF}
& \multicolumn{2}{c}{GigaSpeech} & \multirow{2}{*}{RTF}
& \multirow{2}{*}{Model} & \multicolumn{2}{c}{GigaSpeech} \\
\cmidrule(lr){2-3}\cmidrule(lr){5-6}\cmidrule(l){9-10}
& \textit{dev} & \textit{test} & & \textit{dev} & \textit{test} &
& & \textit{dev} & \textit{test} \\
\cmidrule[\lightrulewidth](r{18pt}){1-7}\cmidrule[\lightrulewidth]{8-10}
0.6B & 9.81 & 9.88 & 0.0724 & 10.43 & 10.57 & 0.0099
& SPEAR-XLarge CTC & 10.30 & 10.41 \\
1.7B & 9.60 & 9.78 & 0.0732 & 10.12 & 10.34 & 0.0102
& SPEAR-XLarge RNN-T & 10.09 & 10.26 \\
4B & 9.56 & 9.65 & 0.0944 & 10.05 & 10.21 & 0.0109
& \textcolor{gray}{Whisper large-v3} & \textcolor{gray}{11.75} & \textcolor{gray}{11.25} \\
8B & 9.56 & 9.59 & 0.0963 & 9.90 & 10.11 & 0.0111
& \textcolor{gray}{Qwen2-Audio-7B} & \textcolor{gray}{11.14} & \textcolor{gray}{11.06} \\
14B & 9.34 & 9.43 & 0.1014 & 9.78 & 10.03 & 0.0121
& \textcolor{gray}{Qwen3-ASR-0.6B} & \textcolor{gray}{9.70} & \textcolor{gray}{9.66} \\
32B & 9.35 & 9.41 & 0.1546 & 9.73 & 9.94 & 0.0150
& \textcolor{gray}{Qwen3-ASR-1.7B} & \textcolor{gray}{9.53} & \textcolor{gray}{9.62} \\
\cmidrule[\heavyrulewidth](r{18pt}){1-7}\cmidrule[\heavyrulewidth]{8-10}
\end{tabular}
\vspace{-1em}
\end{table}

\subsection{Streaming LLM-based ASR}
\label{sec:exp:streaming}

\begingroup
\setlength{\intextsep}{0pt}
\setlength{\columnsep}{10pt}
\begin{wraptable}{r}{0.48\textwidth}
\vspace{-1.4pt}
\begin{minipage}[t]{\linewidth}
\vspace{0pt}
\captionsetup{position=top,skip=4pt}
\centering
\caption{Offline and streaming Query-readout LLM-CTC with fine-tuned Qwen3-ASR.
WER (\%) and left-history length $h$ (seconds).}
\label{tab:streaming}
\small
\setlength{\tabcolsep}{2.8pt}
\begin{tabular}{@{}lc*{4}{>{\centering\arraybackslash}p{2.9em}}@{}}
\toprule
\multirow{3}{*}{Mode} & \multirow{3}{*}{$h$}
& \multicolumn{2}{c}{0.6B} & \multicolumn{2}{c}{1.7B} \\
\cmidrule(lr){3-4}\cmidrule(lr){5-6}
& & \multicolumn{2}{c}{GigaSpeech} & \multicolumn{2}{c}{GigaSpeech} \\
\cmidrule(l{8pt}r{8pt}){3-4}\cmidrule(l{8pt}r{8pt}){5-6}
& & \textit{dev} & \textit{test} & \textit{dev} & \textit{test} \\
\midrule
Offline & \textemdash{} & 10.69 & 10.73 & 10.00 & 10.16 \\
\midrule
\multirow{3}{*}{Streaming} & 2 & 10.98 & 11.02 & 10.30 & 10.35 \\
& 4 & 10.96 & 11.01 & 10.32 & 10.33 \\
& 8 & 10.95 & 10.99 & 10.32 & 10.32 \\
\bottomrule
\end{tabular}
\end{minipage}
\end{wraptable}

\paragraph{Setup.} On GigaSpeech, we separately fine-tune Qwen3-ASR~\citep{shi2026qwen3asr} 0.6B and 1.7B for offline and streaming Query-readout LLM-CTC.
Both use Pruned CTC with utterance-level transcripts.
LoRA fine-tuning keeps the speech encoder and LLM head frozen.
Streaming follows \Secref{sec:ctcllm:streaming} with 2-second chunks and a shared left-history length $h$ for the encoder and LLM. We vary $h$ during fine-tuning and evaluate each streaming model with 2, 4, and 8 seconds of left history.
Both modes use prefix beam search with beam size 4.
See \Appref{app:config:qwen-asr} for details.

\paragraph{Results.} \Tabref{tab:streaming} shows that Query-readout LLM-CTC retains near-offline accuracy in bounded-history streaming.
On \textit{test}, streaming increases WER by less than 3\% relative to matched offline models across both model sizes and all evaluated left-history lengths.
Increasing left history from 2 to 8 seconds changes WER little, indicating limited benefit from longer history.
\par
\endgroup

\section{Conclusion}
\label{sec:conclusion}

We present Pruned CTC, a memory-efficient CTC training method combining exact vocabulary reduction and full-vocabulary normalization with finite-beam alignment pruning.
Compared to standard CTC, its head-and-loss activation memory no longer scales linearly with vocabulary size, while matching standard CTC accuracy.
Building on Pruned CTC, we propose LLM-CTC to adapt pretrained LLMs for non-autoregressive ASR while retaining causal attention and native vocabularies.
Across six Qwen3 sizes, LLM-CTC remains within 7\% relative WER of LLM-CE with 7 to 10$\times$ faster recognition.
The streaming extension of LLM-CTC pairs utterance-level supervision with bounded-history inference and KV cache reuse, staying within 3\% relative WER of matched offline models on the GigaSpeech test set.
Pruned CTC makes native-vocabulary CTC a practical basis for adapting pretrained LLMs to both offline and streaming ASR.

\clearpage

\subsection*{AI use statement}
In this work, we used generative AI tools only to aid and polish writing, specifically to correct grammar and improve the clarity of passages drafted by the authors. We have not used generative AI tools for research ideation, method design, code implementation, experiments, result analysis, or the literature review. We have reviewed all AI-assisted work and take responsibility for the final content of this work, including text, claims or artifacts produced with the aid of generative AI.

\subsection*{Reproducibility statement}
The datasets used in our experiments are publicly available.
The algorithm is specified in \Secref{sec:method:alg}, implementation and precision details in \Appref{app:impl}, and the experimental setup and model configurations in \Secref{sec:exp}, \Appref{app:resources} and \Appref{app:config}.
To facilitate the reproduction of our results, the source code and pretrained checkpoints will be publicly available.

\bibliography{iclr2027_conference}
\bibliographystyle{iclr2027_conference}

\clearpage

\appendix

\section{Proof of Proposition~\ref{prop:exact}}
\label{app:proof}

Let $m$ index the stacked frames and $j$ the grouped columns in \Eqref{eq:grouped}.
We derive gradients on frames of utterances in $\gN_+$, whose alignment sets are nonempty.
Utterances outside this set contribute zero loss and gradient under \Eqref{eq:ctc}.
With the normalizer $\evlambda_m$ of \Eqref{eq:lognorm}, write
\begin{equation*}
  \bar p_j(m) = \softmax(\mR_{m,\cdot})_j
  \qquad \text{and} \qquad
  P_{\sO}(m) = \sum_{v \in \sO} p_v(m),
\end{equation*}
for the grouped class distribution and for the probability mass assigned to the complement $\sO$. If $\sO$ is empty the other column is dropped, $\mR$ is a permutation of the columns of $\mZ$ and the claim is immediate, so take $\sO$ to be nonempty below.

\begin{proof}
We first show that grouping preserves the CTC loss and then verify equality of the gradients.

\paragraph{Step 1: valid alignments use only $\sU_n$.}
Every alignment in $\gB^{-1}_{T_n}(y_n)$ emits only blank and labels of $y_n$.
A nonblank symbol absent from $y_n$ would survive collapsing, so it cannot appear in such an alignment. This excludes the other class and gives
\begin{equation*}
  \gamma_v(m) \equiv 0
  \qquad \text{for all } m \in \sI_n \text{ and } v \in \sV \setminus \sU_n .
\end{equation*}
Labels contributed by other utterances therefore add no valid alignments for utterance $n$.
Relabeling by the indices in $\sU$ preserves blank positions and label equality, so it gives a bijection between the original and grouped valid alignments.

\paragraph{Step 2: the other column reproduces the normalizer.}
Log-sum-exp is associative over any partition of the class axis, and $\{\sU, \sO\}$ partitions $\sV$.
With $\mR$ as defined in \Eqref{eq:grouped},
\begin{equation*}
  \log \sum_{j=0}^{K} \exp \emR_{m,j}
  \; = \; \log \Big( \sum_{v \in \sU} \exp \emZ_{m,v} \; + \sum_{v \in \sO} \exp \emZ_{m,v} \Big)
  \; = \; \log \sum_{v \in \sV} \exp \emZ_{m,v}
  \; = \; \evlambda_m .
\end{equation*}
The first $K$ columns of $\log \softmax(\mR)$ are consequently
\begin{equation*}
  \emR_{m,j} - \evlambda_m = \emZ_{m,\, \sU[j]} - \evlambda_m = \log p_{\sU[j]}(m) ,
\end{equation*}
the full-vocabulary normalized log-probabilities themselves, so $\bar p_j(m) = p_{\sU[j]}(m)$ for $j < K$ and $\bar p_K(m) = P_{\sO}(m)$. Every valid alignment from Step 1 therefore keeps its original probability. The utterance loss in \Eqref{eq:ctc-factorization} is unchanged by grouping.
The alignment posteriors and frame occupancies $\gamma_v(m)$ are therefore unchanged under relabeling.
The objectives agree as functions of $\mZ$. Finite logits give a positive alignment sum for each alignable utterance, so their gradients agree.

\paragraph{Step 3: gradient equivalence.}
Treating the log-probabilities as independent inputs to the CTC recurrence, differentiating the negative log of the sum of alignment probabilities gives
\begin{equation*}
  \frac{\partial \Ls}{\partial \log \bar p_j(m)} = -\gamma_{\sU[j]}(m)
  \qquad (j < K) ,
\end{equation*}
and zero on the other column. Every alignment emits one label at each frame, so the posterior occupancies at that frame satisfy $\sum_{v\in\sV}\gamma_v(m)=1$. Applying the log-softmax Jacobian gives
\begin{equation}
  \frac{\partial \Ls}{\partial \emR_{m,j}}
  = -\gamma_{\sU[j]}(m) + \bar p_j(m) \sum_{v \in \sV} \gamma_v(m)
  = \bar p_j(m) - \gamma_{\sU[j]}(m) ,
  \qquad
  \frac{\partial \Ls}{\partial \emR_{m,K}} = P_{\sO}(m) ,
  \label{eq:grad-grouped}
\end{equation}
since the other column has zero alignment occupancy. For a selected class at column $j < K$, the map $\emZ_{m,\sU[j]} \mapsto \emR_{m,j}$ is the identity, so Step 2 turns the first half of \Eqref{eq:grad-grouped} into $p_v(m) - \gamma_v(m)$ at $v = \sU[j]$. On a complementary column, the chain rule through \Eqref{eq:grouped} gives
\begin{equation*}
  \frac{\partial \emR_{m,K}}{\partial \emZ_{m,v}} = \exp(\emZ_{m,v} - \emR_{m,K}) ,
\end{equation*}
so the explicit log-sum-exp over $\sO$ cancels:
\begin{equation}
  \begin{aligned}
    \frac{\partial \Ls}{\partial \emZ_{m,v}}
    &= P_{\sO}(m) \, \exp(\emZ_{m,v} - \emR_{m,K}) \\[2pt]
    &= \exp(\emR_{m,K} - \evlambda_m) \, \exp(\emZ_{m,v} - \emR_{m,K})
     \; = \; p_v(m) ,
    \qquad v \in \sO .
  \end{aligned}
  \label{eq:offsupport}
\end{equation}
Since $\gamma_v(m) \equiv 0$ for $v \in \sO$, the right-hand side is also $p_v(m) - \gamma_v(m)$, so both cases recover \Eqref{eq:grad-full}. By the chain rule, the gradients with respect to $\mX$, $\mW$ and $\vb$ are $(\partial \Ls / \partial \mZ)\mW$, $(\partial \Ls / \partial \mZ)^{\top}\mX$ and $(\partial \Ls / \partial \mZ)^{\top}\vone_M$, respectively. Since both formulations use the same vocabulary head and give the same logit gradient, these three gradients also agree.
\end{proof}

\section{Pruned CTC Implementation and Validation}
\label{app:impl}

\subsection{Loss and gradient computation}
\label{app:impl:backend}
\label{app:impl:grad}

\paragraph{Forward computation.} We stack the $M$ head-input states without padding before applying the vocabulary head.
The selected classes $\sU$ are ordered with blank first, followed by sorted distinct target labels. Targets are relabeled with these indices, with blank at index zero for k2~\citep{k2}.
The finite-beam CTC computation evaluates $\Ls_\beta$ from the $K$ columns of $\mS$ and each utterance's starting row and frame count. The full-vocabulary normalizer is accumulated and subtracted in 64-bit floating point (FP64) before the entries of $\mS$ are rounded once to 32-bit floating point (FP32).
For unpruned reduced CTC, we retain all valid alignments and apply class-axis log-softmax to grouped logits $\mR$. We then restore padding for PyTorch CTC, using frame counts to exclude padded frames.

\paragraph{Loss aggregation.} A target with $r_n$ pairs of identical adjacent labels requires at least $|y_n|+r_n$ frames. This condition defines the structurally alignable utterances $\gN_+$ in \Eqref{eq:ctc}. The finite-beam computation sums finite per-utterance losses and replaces nonfinite losses with zero. This selects the same utterances as \Eqref{eq:ctc} when every structurally alignable utterance has a finite loss. Lattice capacity limits can narrow the beam or leave an empty lattice. Numerical failures can also make a loss nonfinite.

\paragraph{Backward computation.} With the retained lattice locally fixed, differentiate $\emS_{m,j}=\emZ_{m,\sU[j]}-\evlambda_m$ using $\partial\evlambda_m/\partial\emZ_{m,v}=p_v(m)$. The chain rule gives
\begin{equation*}
  \frac{\partial\Ls_\beta}{\partial\emZ_{m,v}}
  = \sum_{j:\sU[j]=v}\emG_{m,j}
    - \left(\sum_{j<K}\emG_{m,j}\right)p_v(m),
\end{equation*}
where the first sum is zero outside $\sU$, yielding \Eqref{eq:vjp}.
The row sums of $\mG$ account for loss scaling and excluded utterances. Sparse and dense terms are combined before matrix products and sums. Each vocabulary chunk computes distinct rows of the head gradients and adds its contribution to the head-input gradient. For unpruned reduced CTC, gradients pass through log-softmax on $\mR$, distributing the other-column gradient to complement classes through log-sum-exp, as in \Eqref{eq:offsupport}. This computation also proceeds in chunks.
Finite-beam CTC stores $M$ normalizers and $MK$ log-probabilities for the selected classes, with a temporary FP64 normalization block of at most $M\times C$ entries. Each device stores the full head.

\subsection{Numerical accuracy}
\label{app:impl:precision}

We compare Pruned CTC with a dense reference that computes vocabulary projection, normalization and unpruned CTC in FP64. Paired evaluations share encoder outputs, head parameters and targets. Relative error is the norm of the difference divided by the reference norm, using Frobenius norms for matrices and Euclidean norms for vectors.
Absolute error is the largest entrywise difference. Gradients use the sum of utterance losses.

\paragraph{Effect of precision and score reuse.}
We use NVIDIA-H100-SXM5-80GB GPUs to compare precision choices when $p_v(m)-\gamma_v^\beta(m)$ can nearly cancel.
We use a 500-class CTC model trained for 50 epochs, with a Zipformer-M encoder that produces 512-dimensional outputs. Five GigaSpeech \textit{dev} batches contain 41,368 to 58,903 frames and 477 to 486 selected classes. Projections and CTC inputs use FP32, with alignment beam 100 and chunk width 4096. We vary forward normalization precision, reuse of $\mS$ during the backward pass, and the precision of the matrix products and sums that form gradients.
With FP64 forward normalization, reuse preserves the subtraction in FP64 before rounding to FP32. Recomputation subtracts an FP32 copy of the normalizer.

\begin{table}[t]
\centering
\caption{Precision choices after training. Entries are gradient error ratios for the selected classes relative to the first configuration's error on each batch. Ranges span five batches.}
\label{tab:precision-ablation}
\centering
\footnotesize
\setlength{\tabcolsep}{4.5pt}
\begin{tabular}{@{}ccccc@{}}
\toprule
\multirow{2}{*}{\makecell{Forward\\normalization}}
& \multirow{2}{*}{\makecell{Log-probabilities for\\selected classes in backward}}
& \multirow{2}{*}{\makecell{Gradient products\\and sums}}
& \multicolumn{2}{c}{Gradient error ratio} \\
\cmidrule(lr){4-5}
& & & Weight & Bias \\
\midrule
FP32 & recomputed & FP32 & 1.000 & 1.000 \\
FP32 & reused & FP32 & 1.000 & [0.998, 1.000] \\
FP64 & recomputed & FP32 & [0.348, 0.474] & [0.229, 0.357] \\
FP64 & reused & FP32 & [0.061, 0.141] & [0.062, 0.095] \\
FP64 & reused & FP64 & [0.047, 0.087] & [0.062, 0.097] \\
\bottomrule
\end{tabular}
\end{table}

\begin{figure}[t]
\centering
\includegraphics[width=\linewidth]{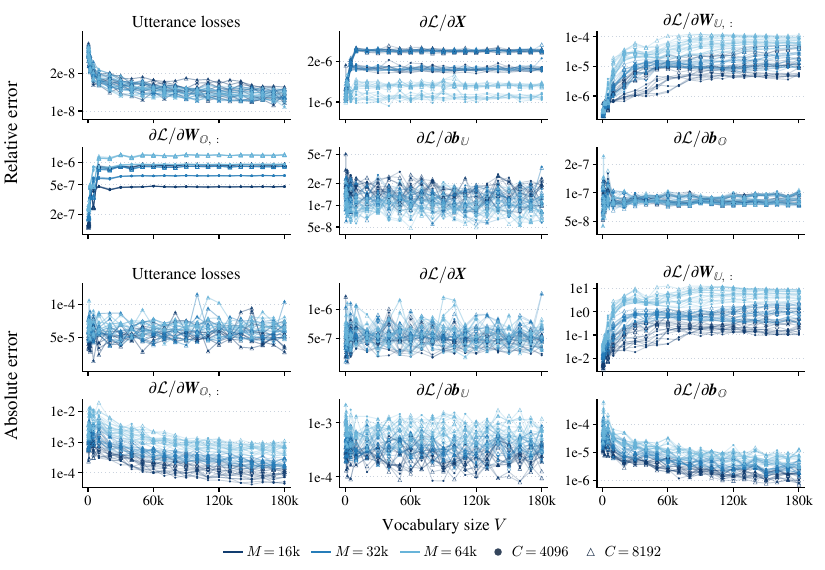}
\captionsetup{position=bottom,skip=5pt}
\caption{Numerical error across vocabulary sizes: relative error above and absolute error below.
Every panel shows all 1,188 comparisons with a linear x-axis and logarithmic y-axis.
Each curve tracks one initialization and batch at fixed $M$ and $C$.}
\label{fig:ctc-precision}
\end{figure}

\Tabref{tab:precision-ablation} supports the combined precision strategy used in Pruned CTC: FP64 normalization, reuse of $\mS$ and FP32 gradient products and sums.
Relative weight-gradient errors for the selected classes range from $1.18\times10^{-6}$ to $3.32\times10^{-6}$.
FP64 gradient products improve weight accuracy further, with mixed effects on bias gradients.

\paragraph{Effect of vocabulary and batch size.}
We measure loss and gradient accuracy across vocabulary and batch sizes on NVIDIA-H100-SXM5-80GB GPUs, using randomly initialized CTC models with Zipformer-M encoders.
We use 22 vocabularies: 500, 1,000, 2,000 and 5,000 classes, followed by 10,000 to 180,000 in increments of 10,000. Batch frame counts are 16,000, 32,000 and 64,000. Each vocabulary and frame-count pair uses three initializations, three GigaSpeech training batches per initialization and chunk widths 4096 and 8192, giving 1,188 comparisons. Encoder inference uses bfloat16 (BF16) mixed precision, with outputs converted to FP32 for the vocabulary head. The dense reference groups entire utterances to bound memory, while the reduced computation processes full batches.
Both use unpruned FP64 CTC recurrences to isolate class-axis arithmetic error.

\Figref{fig:ctc-precision} shows every comparison, with loss errors measured over the vector of utterance losses. Maximum relative errors are $3.45\times10^{-8}$ for loss, $2.65\times10^{-6}$ for head-input gradients and $1.21\times10^{-4}$ for weight gradients for the selected classes. The latter persists when both implementations receive identical gradients with respect to $\mS$. On the worst batch, with $M=64{,}000$ and $V=80{,}000$, FP64 gradient products reduce the relative weight-gradient error to $9.94\times10^{-8}$, showing that FP32 accumulation is the main source of error.

\subsection{Alignment pruning}
\label{app:impl:beam}

Each CTC alignment traverses the lattice from its start state to a final state.
k2's dense intersection routine prunes states and arcs using the score of the best alignment passing through each one, relative to the best alignment overall. The CTC loss computation sums over all retained alignments. Retained arcs can recombine into alignments scoring more than $\beta$ below the best alignment.

\begin{samepage}
\paragraph{Discarded posterior mass.} For $n\in\gN_+$, use identical emission scores and a nonempty retained lattice. The loss gap $\Delta_n=\Ls_{\beta,n}-\Ls_n\geq0$ satisfies
\begin{equation*}
  \exp(-\Delta_n)
  = \frac{\displaystyle\sum_{\pi\in\gP_{\beta,n}}
      \prod_{t=1}^{T_n}p_{\pi_t}(\sI_n[t])}
    {\displaystyle\sum_{\pi\in\gB^{-1}_{T_n}(y_n)}
      \prod_{t=1}^{T_n}p_{\pi_t}(\sI_n[t])}
  = 1-\delta_n .
\end{equation*}
Thus $\delta_n=1-\exp(-\Delta_n)$ is the posterior mass of discarded alignments, conditioned on the target transcript. When $\delta_n>0$, let $\gamma_v^{\mathrm{drop}}(m)$ be the occupancy conditioned on discarded alignments. The full occupancy is the mixture
\begin{equation*}
  \gamma_v(m)=(1-\delta_n)\gamma_v^\beta(m)
    +\delta_n\gamma_v^{\mathrm{drop}}(m),
  \qquad m\in\sI_n .
\end{equation*}
Each occupancy vector is a probability distribution, so in exact arithmetic
\begin{equation*}
  |\gamma_v^\beta(m)-\gamma_v(m)|\leq\delta_n,
  \qquad
  \sum_{v\in\sV}|\gamma_v^\beta(m)-\gamma_v(m)|\leq2\delta_n .
\end{equation*}
When $\delta_n=0$, the occupancies agree. For a locally fixed retained lattice, the shared $p_v(m)$ terms cancel, giving the same bounds on differences between unit-weight logit gradients.
\end{samepage}

Numerical estimates $\widehat{\delta}_n$ include rounding from both dynamic programs and are not certified upper bounds.
A batch meets the criterion when $\max_n\widehat{\delta}_n\leq10^{-11}$.
The required alignment beam is the smallest tested value meeting this criterion.
We retain signed estimates, including zeros.

\paragraph{Encoder-based ASR.} We evaluate CTC models with Zipformer-M encoders at random initialization on NVIDIA-H100-SXM5-80GB GPUs.
We use shared FP32 scores, promoted to FP64 for the unpruned recurrence.
Both losses and $\widehat{\Delta}_n=\widehat{\Ls}_{\beta,n}-\widehat{\Ls}_n$ are computed in FP64, with $\widehat{\delta}_n=-\operatorname{expm1}(-\widehat{\Delta}_n)$.

We first evaluate a 500-class model on ten GigaSpeech training batches containing 6,090 utterances at three random initializations and under eight independent augmentations with MUSAN~\citep{snyder2015musan} and SpecAugment~\citep{park2019specaugment}. These combinations yield 240 batch evaluations and 146,160 utterance evaluations at each of 18 alignment beams from 10 to 150. Initializations share audio batches within each augmentation.

Across augmentations, the largest required alignment beam ranges from 80 to 90.
All batches meet the criterion at $\beta=90$, ten log-score units below the training value of 100, as shown in \Figref{fig:beam-sweep}. The largest $\widehat{\delta}_n$ at $\beta=150$ is about $2.3\times10^{-12}$.
The log plot shows positive upper-tail statistics.

We then evaluate a grid of 22 vocabularies: 500, 1,000, 2,000 and 5,000 classes, followed by 10,000 to 180,000 in increments of 10,000. Batch frame counts are 16,000, 32,000 and 64,000. Batches contain 160, 320 or 640 utterances, each with exactly 100 encoder frames. Each vocabulary and batch-size pair is evaluated at three random initializations, using three batches per initialization.
For each of these 198 configurations, we report the largest required alignment beam across its three batches, for 594 batch evaluations in total.

All requirements lie between 40 and 70 in \Figref{fig:beam-vocab}. The dashed line marks the training alignment beam of 100, which meets the measured criterion across both the variable-length training batches and this vocabulary grid at the sampled initializations.
A separate numerical-accuracy check over these vocabulary sizes and frame counts uses three random initializations and three batches per initialization. At $C=4096$ and $\beta=100$, its 594 paired comparisons with unpruned CTC use identical emission scores and give a maximum absolute loss difference of $6.82\times10^{-13}$ per utterance.

\begin{figure}[t]
\centering
\begin{minipage}[t]{0.485\textwidth}
\vspace{0pt}
\centering
\includegraphics{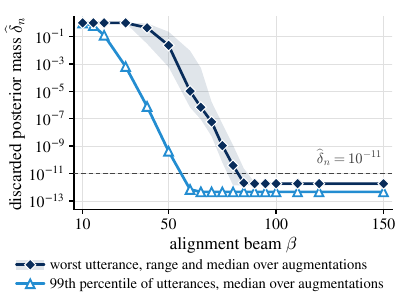}
\caption{Estimated discarded posterior mass on GigaSpeech over eight augmentations, with three initializations each.}
\label{fig:beam-sweep}
\end{minipage}
\hfill
\begin{minipage}[t]{0.485\textwidth}
\vspace{0pt}
\centering
\includegraphics{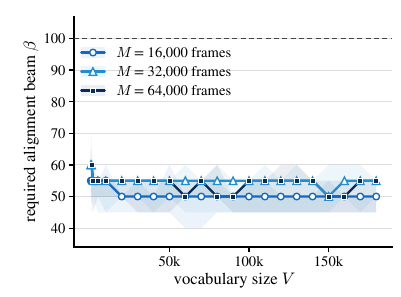}
\caption{Required alignment beam for $\widehat{\delta}_n\leq10^{-11}$ per utterance on GigaSpeech: median and range over initializations.}
\label{fig:beam-vocab}
\end{minipage}
\end{figure}

\paragraph{LLM-based ASR.} We evaluate offline Query-readout LLM-CTC using the SPEAR-Qwen3-8B architecture, with $V=151{,}936$.
All components are initialized from configuration without pretrained weights. No parameters are updated.
We use 25 GigaSpeech training batches containing 1,383 utterances at three random initializations, giving 4,149 evaluations per beam.
All initializations and beams share features augmented with MUSAN noise and SpecAugment.

Independent recurrences in extended precision compute the full and retained alignment sums on identical FP32 scores with shared per-frame scaling.
We obtain $\widehat{\delta}_n$ as one minus the ratio of retained to full alignment sums.

\Tabref{tab:llm-pruning} shows that 75 is the smallest tested alignment beam meeting the $10^{-11}$ criterion for all evaluations.
Across all tested beams, no empty retained lattices or nonfinite losses occur before zeroing.
Estimates at $\beta\geq100$ are at the numerical floor of an unpruned lattice control and do not establish exact zero discarded mass.

\begin{table}[htbp]
\centering
\caption{Discarded posterior mass for randomly initialized SPEAR-Qwen3-8B.}
\label{tab:llm-pruning}
\begin{tabular}{rcc}
\toprule
Alignment beam $\beta$ & Maximum $\widehat{\delta}_n$ & Above $10^{-11}$ \\
\midrule
10 & 0.999858 & 2,847 / 4,149 \\
25 & 0.749547 & 1,301 / 4,149 \\
50 & $2.09\times10^{-4}$ & 14 / 4,149 \\
75 & $1.42\times10^{-12}$ & 0 / 4,149 \\
100, 150, 200 & Numerical floor & 0 / 4,149 \\
\bottomrule
\end{tabular}
\end{table}

\section{Resource Benchmarks}
\label{app:resources}

\subsection{Shared settings}
\label{app:resource:measurement}

\paragraph{Data.} The resource benchmarks use GigaSpeech training audio and the same augmentation methods as training: MUSAN~\citep{snyder2015musan} noise and SpecAugment.

\paragraph{Measurement.} We measure forward and backward through the vocabulary head and CTC loss, and full training steps.
For the head and loss, memory is the largest increase above resident head parameters and inputs after clearing their gradients.
Full-step memory is the largest allocated amount during the complete model forward pass, CTC loss, backward, the optimizer update and gradient reset, including resident model and optimizer storage.
Elapsed wall-clock time is measured with device synchronization at both boundaries. Memory uses PyTorch's allocated-memory counters.

\subsection{Encoder-based ASR}
\label{app:config:bench}

\paragraph{Data and vocabularies.} We use 80-dimensional FBank features.
Each utterance contributes exactly 100 encoder frames without frame padding, giving $M/100$ utterances per batch. The three batch sizes contain 16,000, 32,000 and 64,000 frames. We train 22 BPE vocabularies on the transcripts: 500, 1,000, 2,000 and 5,000 classes, followed by increments of 10,000 from 10,000 to 180,000.
Every count includes blank. For each batch size, augmented features are fixed across vocabularies, while retokenization can change target lengths and the number of selected classes.

\paragraph{Model and computation.} The Zipformer-M encoder produces 512-dimensional states in BF16 mixed precision. Both methods apply dropout and the vocabulary head in FP32, using PyTorch 2.9.1 and k2 1.24.4.dev20251118.
Standard CTC applies log-softmax followed by PyTorch's native CUDA CTC.
For Pruned CTC, we use the precision strategy of \Appref{app:impl:precision}, vocabulary chunk width 4096, alignment beam 100, and lattice limits of 100,000,000 states and 1,073,741,824 arcs. Both dynamic programs receive FP32 scores, and Pruned CTC computes the retained alignment log-likelihood in FP64.
The head-and-loss benchmark keeps its initial head weights fixed and uses identical FP32 encoder outputs and dropout masks for both methods, without propagating gradients into the encoder.
Full training steps start from the same initialization and process the same ordered batches. Each method updates its own weights with ScaledAdam and the Eden learning-rate schedule, using a base learning rate of 0.045.

\paragraph{Measurement.} Each configuration uses one NVIDIA-H100-SXM5-80GB GPU for three runs of five warmup and 50 measured batches.
We average run medians after computing memory and time ratios per paired batch.

\paragraph{Memory scaling.} On this grid, memory for the standard CTC head and loss is approximated by $(4MV + 640M)$ FP32 values, within 0.24\% of every measured point. This empirical model describes the benchmark's vocabulary-dependent allocations.
For Pruned CTC, the temporary $M \times C$ block used by the FP64 normalizer dominates memory once the vocabulary spans several chunks. The selected classes require at most 2,580 columns across the grid.
Standard CTC first runs out of memory in the head and loss at $V = 170{,}000$ for $M = 32{,}000$, and at $V = 90{,}000$ for $M = 64{,}000$.
The crosses in \Figref{fig:vocab-scaling} extrapolate a linear fit to the four largest feasible vocabularies in each column, giving 81.1 and 86.0 GiB at these two points.
For full training steps, the first failures occur at $V = 140{,}000$ at $M = 32{,}000$ and $V = 60{,}000$ at $M = 64{,}000$. \Tabref{tab:whole-step} reports the largest tested vocabulary at which standard CTC completes a training step for each batch size.

\subsection{LLM-based ASR}
\label{app:config:bench:llm}

\paragraph{Setup.} We profile Query-readout LLM-CTC with SPEAR-Qwen3-8B and SPEAR-Qwen3-14B on one NVIDIA A100-SXM4-80GB GPU, with activation checkpointing enabled for both.
The frozen LLM head has $V=151{,}936$, with $D=4096$ and $D=5120$, respectively.
\Appref{app:config:llm} and \Appref{app:config:spear} give the shared training and model settings. These measurements use a 1,200-second padded duration budget.
The GigaSpeech batch has $N=107$, $T=111$ and $M=11{,}458$, using the same augmented 128-dimensional FBank features and targets for every method.
The 8B and 14B LLM-CTC training runs in \Tabref{tab:spear-training-budgets} use eight GPUs and 200-second microbatches, with activation checkpointing enabled only for 14B.

\begin{table}[t]
\centering
\caption{Memory and runtime comparison for SPEAR-Qwen3. Head/loss reports memory above the resident head and inputs. Memory is in GiB and time is in seconds per full training step.}
\label{tab:llm-resources}
\begin{tabular}{@{}lrrr@{\hspace{18pt}}rrr@{}}
\toprule
& \multicolumn{3}{c}{8B} & \multicolumn{3}{c}{14B} \\
\cmidrule(lr){2-4}\cmidrule(lr){5-7}
Method & Head/loss & Full step & Time & Head/loss & Full step & Time \\
\midrule
Standard CTC & 29.21 & 56.06 & 10.23 & 29.79 & 72.71 & 15.73 \\
Unpruned reduced CTC & 1.21 & 34.67 & 11.04 & 1.33 & 52.82 & 16.66 \\
Pruned CTC ($\beta=100$) & 1.14 & 34.67 & 11.00 & 1.20 & 52.84 & 16.61 \\
\bottomrule
\end{tabular}
\end{table}

\paragraph{CTC computation.} Standard CTC applies the FP32 LLM head projection and log-softmax in $N\times T\times V$ layout, then transposes for PyTorch CTC.
Unpruned reduced CTC evaluates $\Ls$ with grouped logits $\mR$ and recomputation. Pruned CTC uses $\Ls_\beta$ with $\beta=100$.
Both reduced implementations use $C=4096$ and FP64 normalization. The head is stored in BF16 and converted to FP32 for projection, in full for standard CTC and in chunks for the reduced implementations.
Losses are normalized by the batch's target-token count. We use PyTorch 2.9.1 and k2 1.24.4.dev20260625.

\paragraph{Measurement.} Within each model, methods start from the same trained checkpoint with fresh AdamW states and update independently.
Each repeats the batch for five warmup steps, a capture step, three measured steps and a diagnostic trace step.
\Tabref{tab:llm-resources} reports the largest memory and median time over the three measured steps. Timing includes prepared FBank transfer through the optimizer update and gradient reset.
For the head and loss, methods share states $\tX$ and blank offset $\eta$ from the unpruned reduced run, with the same frame counts, frozen head and targets within each model. Two warmup passes precede three measured forward and backward passes.

\paragraph{Results.} \Tabref{tab:llm-resources} shows that Pruned CTC reduces full-step memory by 38.2\% for 8B and 27.3\% for 14B, with 7.5\% and 5.6\% more time than standard CTC, respectively.
Memory for the head and loss falls by a factor of 24.8 to 25.7, showing that the savings persist with a frozen native LLM head during LoRA adaptation. Model storage and activations elsewhere in the training step limit the full-step reduction.
Unpruned reduced CTC retains all valid alignments and already achieves nearly the same full-step memory and runtime as Pruned CTC. Exact vocabulary reduction and recomputation therefore account for most of the measured savings.
The implementation with finite-beam pruning further lowers memory for the head and loss, with little additional effect on full-step cost.

\section{Experimental Settings}
\label{app:config}

\subsection{Encoder-based ASR}
\label{app:config:nn}

The Pruned CTC systems in \Secref{sec:exp:nn} use 80-dimensional FBank features and train on NVIDIA A100-SXM4-80GB GPUs.
Training uses one GPU for 120 epochs on AISHELL-1 and two GPUs for 100 epochs on LibriSpeech and 60 epochs on GigaSpeech.

\subsection{Shared LLM-based ASR settings}
\label{app:config:llm}

\paragraph{Training.} We train on GigaSpeech using NVIDIA A100-SXM4-80GB GPUs, with MUSAN~\citep{snyder2015musan} and SpecAugment. The speech encoder and pretrained LLM parameters, including the LLM head, remain frozen. We train the projector and apply LoRA~\citep{hu2022lora} to attention query, key, value and output projections and feed-forward up, gate and down projections. CTC adds a scalar blank offset, and query readouts use one shared query whose entries are initialized from a zero-mean Gaussian distribution.

\begin{table}[t]
\centering
\caption{Shared LLM-based ASR settings. LR: learning rate.
Warmup is in optimizer updates.
For SPEAR-Qwen3-32B, the peak LR is 0.00005 for both LLM-CE and LLM-CTC.}
\label{tab:llm-shared-config}
\begin{tabular}[t]{lr}
\toprule
Setting & Value \\
\midrule
Optimizer & AdamW \\
Peak LR (default) & 0.0001 \\
Weight decay & 0.1 \\
Moment coefficients & 0.9, 0.95 \\
Epsilon & 0.000001 \\
Warmup & 200, linear \\
LR decay & Cosine \\
LR floor & 5\% of peak \\
Clip norm & 5 \\
Mixed precision & BF16 \\
\bottomrule
\end{tabular}
\hspace{2em}
\begin{tabular}[t]{lr}
\toprule
Setting & Value \\
\midrule
LoRA rank & 64 \\
LoRA $\alpha$ & 64 \\
LoRA dropout & 0.05 \\
Query std. & 0.02 \\
LLM head classes & 151,936 \\
Vocabulary chunk & 4,096 \\
Alignment beam & 100 \\
Lattice states & 100,000,000 \\
Qwen3 tokenizer size & 151,669 \\
Qwen3-ASR tokenizer size & 151,705 \\
\bottomrule
\end{tabular}
\end{table}

\paragraph{Targets.} Transcripts are lowercased. CE uses a user/assistant template and supervises the transcript followed by EOS. CTC uses the five-token prompt ``Transcribe the speech.'' Blank occupies the first unused row of the LLM head beyond the tokenizer vocabulary, and all head classes remain in the normalizer.
Training utterances have positive frame counts and fit the LLM context window.
CTC targets also satisfy $T_n\geq |y_n|+r_n$.

\paragraph{Blank initialization.} We initialize the blank offset from \Secref{sec:ctcllm:readouts} as $\eta=\log c_{\mathrm{bl}}$, where $c_{\mathrm{bl}}=|\{v\in\sV:\cos(\mW_{v,:},\mW_{v_{\mathrm{bl}},:})>0.999\}|$ includes the blank row itself. This heuristic raises the initial blank logit to account for probability mass shared with similarly directed head rows.

\subsection{SPEAR-Qwen3}
\label{app:config:spear}

\paragraph{Models.} SPEAR-XLarge v2~\citep{yang2026spear} takes 128-dimensional FBank features and produces 1280-dimensional embeddings at about 50 frames/s.
The projector applies layer normalization, stacks five frames, and uses two linear layers with hidden width 4096 and ReLU, producing about ten embeddings/s.
\Tabref{tab:qwen-config} lists the Qwen3~\citep{yang2025qwen3} backbones, and \Tabref{tab:spear-training-budgets} gives their training configurations.

\begin{table}[t]
\centering
\caption{Qwen3 backbone configurations for SPEAR-Qwen3.}
\label{tab:qwen-config}
\begin{tabular}{lrrc}
\toprule
Qwen3 & LLM layers & LLM width $D$ & Tied input/output weights \\
\midrule
0.6B & 28 & 1024 & Yes \\
1.7B & 28 & 2048 & Yes \\
4B & 36 & 2560 & Yes \\
8B & 36 & 4096 & No \\
14B & 40 & 5120 & No \\
32B & 64 & 5120 & No \\
\bottomrule
\end{tabular}
\end{table}

\paragraph{Training.} All runs use eight GPUs, 180,000 optimizer updates and ZeRO stage 1.
Audio/update is 1,600 seconds, the upper limit across GPUs and accumulated microbatches.

\begin{table}[t]
\centering
\caption{Training configurations for SPEAR-Qwen3. GA: gradient accumulation.
AC: activation checkpointing. Microbatch bounds audio duration in seconds per GPU.}
\label{tab:spear-training-budgets}
\begin{tabular}{lcccccc}
\toprule
& \multicolumn{3}{c}{\textbf{\textit{LLM-CE}}} & \multicolumn{3}{c}{\textbf{\textit{LLM-CTC}}} \\
\cmidrule(lr){2-4}\cmidrule(lr){5-7}
Qwen3 & Microbatch & GA & AC & Microbatch & GA & AC \\
\midrule
0.6B, 1.7B, 4B, 8B & 200 & 1 & No & 200 & 1 & No \\
14B & 100 & 2 & No & 200 & 1 & Yes \\
32B & 100 & 2 & Yes & 200 & 1 & Yes \\
\bottomrule
\end{tabular}
\end{table}

\paragraph{Decoding.} For WER evaluation, LLM-CE uses beam search with length penalty 1.0, and LLM-CTC uses prefix beam search with token penalty 0.3. Both use beam size 4 and no external LM.
AR generation stops at EOS or after 200 generated tokens.

\paragraph{Timing protocol.}
\label{app:config:rtf}
\Tabref{tab:rtf-settings} gives the timing settings. Timing uses the same trained checkpoints as the WER evaluation, with LoRA weights merged for inference. All systems use the same 1,024 duration-stratified GigaSpeech \textit{test} utterances, totaling 1.82 hours.
LLM-CE uses greedy generation with beam size 1, stopping at EOS or after 200 generated tokens.
LLM-CTC uses framewise greedy decoding followed by CTC collapse, with frame counts determined by the audio.
FBank extraction finishes before timing. Timing includes transfer of prepared CPU FBank features, the encoder, projector, LLM, LLM head and text decoding. Paired systems use the same GPU and utterance order, with GPU synchronization before and after each utterance.
For each pass, we divide total recognition time by total audio duration and report the median of the three ratios.

\begin{table}[t]
\centering
\caption{SPEAR-Qwen3 timing settings.}
\label{tab:rtf-settings}
\small
\begin{tabular}{@{}ll@{}}
\toprule
Setting & Value \\
\midrule
GPU model & NVIDIA-H100-SXM5-80GB \\
LLM weights & BF16 \\
Encoder/projector & BF16 mixed precision \\
LLM head & FP32 \\
Vocabulary chunk & 4,096 \\
Batch size & 1 utterance \\
CPU threads & 1 \\
Software & PyTorch 2.9.1, Transformers 4.57.6 \\
Warmup utterances & 32 \\
Timed passes & 3 \\
\bottomrule
\end{tabular}
\end{table}

\subsection{Qwen3-ASR}
\label{app:config:qwen-asr}

\paragraph{Models and inputs.} \Tabref{tab:qwen-asr-config} supplements the shared settings in \Tabref{tab:llm-shared-config}. The projector has two linear layers separated by GELU, and the frozen speech encoder remains in evaluation mode.
Offline and streaming modes use 128-dimensional log-mel FBank from 16 kHz audio, with a 25 ms window and 10 ms hop.
Training targets are nonempty token sequences without special tokens.

\begin{table}[t]
\centering
\caption{Qwen3-ASR configurations for offline and streaming fine-tuning.
GA: gradient accumulation. AC: activation checkpointing.
Microbatch bounds audio duration in seconds per GPU. Audio/update is the upper limit in seconds across GPUs and accumulated microbatches.}
\label{tab:qwen-asr-config}
\begin{tabular}[t]{lcc}
\toprule
Setting & 0.6B & 1.7B \\
\midrule
Encoder blocks & 18 & 24 \\
Encoder width & 896 & 1024 \\
LLM layers & 28 & 28 \\
LLM width $D$ & 1024 & 2048 \\
\bottomrule
\end{tabular}
\hspace{2em}
\begin{tabular}[t]{lcc}
\toprule
Setting & 0.6B & 1.7B \\
\midrule
GPUs & 8 & 8 \\
Microbatch & 200 & 200 \\
GA & 1 & 1 \\
Audio/update & 1,600 & 1,600 \\
Updates & 180,000 & 180,000 \\
AC & No & No \\
ZeRO stage & 1 & 1 \\
\bottomrule
\end{tabular}
\end{table}

\paragraph{Offline inputs.} Feature extraction, normalization and SpecAugment operate on entire utterances. The encoder retains its pretrained 1-second convolution chunks and 8-second bidirectional attention windows. All speech embeddings precede queries under causal attention in the LLM.

\paragraph{Streaming inputs.} Feature extraction, normalization and SpecAugment operate independently on each 2-second chunk. The encoder processes the preceding $h$ seconds together with the current chunk using bidirectional attention and returns only current-chunk outputs. Training batches the overlapping encoder windows in \Figref{fig:streaming-readout}\captionb{} as independent sequences.
Each full chunk yields 200 mel frames, 26 speech embeddings and 26 queries. Final partial chunks retain their actual lengths, with padding only as required by the analysis window and hop.
Each training microbatch shares one integer $h$ across utterances and devices.
We select 8 seconds with probability one half and each value from 1 to 7 seconds with probability $1/14$. The encoder and LLM attention mask use the same $h$.
Validation uses 8 seconds. Decoding reuses each streaming checkpoint across the histories in \Tabref{tab:streaming}, with at most 10 seconds of encoder input per chunk.
History boundaries follow audio time and may split an earlier chunk, as defined in \Appref{app:readout:streaming}.

\paragraph{Decoding.} Both offline and streaming modes use token penalty 0.7.
Streaming prefix beam search retains candidate prefixes and their blank/nonblank probabilities, so CTC collapse $\gB$ spans chunk boundaries.
Greedy collapse retains the preceding raw label, including blank.
Each utterance maintains independent mel history, KV cache and CTC state.
Accumulated token IDs are decoded together, and states are reset when the utterance ends.
Evaluation retains utterances independently of reference token counts.

\subsection{External ASR baselines}
\label{app:config:baselines}

\paragraph{Scoring.} All models use 16 kHz mono audio and beam size 4.
Model-specific prompts and language IDs are given below.
References use official GigaSpeech normalization.
For hypotheses, we convert standalone numeric literals to English cardinal words with num2words 0.5.14. We remove non-scoring tags and punctuation while preserving word-internal apostrophes and splitting hyphens, then apply official GigaSpeech normalization.

\begin{itemize}[leftmargin=*, itemsep=0pt, topsep=2pt]
\item \textbf{Whisper:} The large-v3 model\footnote{\urlstyle{same}\url{https://huggingface.co/openai/whisper-large-v3}} is decoded with the language ID fixed to English.
\item \textbf{Qwen2-Audio:} The base Qwen2-Audio-7B\footnote{\urlstyle{same}\url{https://huggingface.co/Qwen/Qwen2-Audio-7B}} uses the official English prompt, ``Detect the language and recognize the speech:'', followed by the English language token.
\item \textbf{Qwen3-ASR:} The 0.6B\footnote{\urlstyle{same}\url{https://huggingface.co/Qwen/Qwen3-ASR-0.6B}} and 1.7B\footnote{\urlstyle{same}\url{https://huggingface.co/Qwen/Qwen3-ASR-1.7B}} models are decoded using the official qwen-asr Python package, with the language ID fixed to English.
\end{itemize}

\section{LLM-CTC Analysis}
\label{app:ablation}
\label{app:readout}

\subsection{Context and computation}
\label{app:readout:theory}
\label{app:readout:cost}
\label{app:readout:streaming}

\paragraph{Readout structure and offline cost.} \Tabref{tab:readout-structure} compares the speech and query readouts defined in \Secref{sec:ctcllm}.
For a fixed number $T_n$ of speech embeddings, all provide $T_n$ CTC frames and the same valid alignment set for a given transcript.
Speech embeddings already carry encoder context.
Query copies share their initial vector but occupy distinct rotary positions: the distance from query $j$ to speech position $t$ is $T_n+j-t$ offline and $T_{n,b}+j-t$ within streaming chunk $b$.
Each LLM layer performs tokenwise work proportional to its input length $S$, with $S(S+1)/2$ attention pairs under causal attention over the full sequence.
The offline query readout increases $S$ from $P+T_n$ to $P+2T_n$.
Both offline readouts incur the same encoder and projector work.
All readouts apply the frozen LLM head to $T_n$ normalized CTC states.

\begin{table}[t]
\centering
\caption{CTC readout structure for $T_n$ speech embeddings, prompt length $P$ and LLM width $D$. Speech, Query and Streaming query denote the speech, query and streaming query readouts.
Position counts are per utterance. Forward passes refer to inference. Speech access describes direct attention within the LLM, with streaming left history bounded by $h$ seconds.}
\label{tab:readout-structure}
\begin{tabular}{lccc}
\toprule
Property & Speech & Query & Streaming query \\
\midrule
LLM positions & $P+T_n$ & $P+2T_n$ & $P+2T_n$ \\
CTC frames & $T_n$ & $T_n$ & $T_n$ \\
Query parameters & 0 & $D$ & $D$ \\
Speech access & Prefix & Entire utterance & Current chunk + history \\
Initial content & Speech embedding & Shared query & Shared query \\
LLM forward passes & One per utterance & One per utterance & One per chunk \\
\bottomrule
\end{tabular}
\end{table}

\paragraph{Streaming time coordinates.} For the Qwen3-ASR frontend in \Appref{app:config:qwen-asr}, let $s_{n,b}$ be the waveform sample count of nonempty chunk $b$ and $F_{n,b}$ its mel frame count after required end padding.
With 16 kHz audio, a 400-sample analysis window and a 160-sample hop,
\begin{equation*}
  F_{n,b} = \left\lceil\frac{\max(s_{n,b},400)}{160}\right\rceil,
  \qquad
  T_{n,b} = 13\left\lfloor\frac{F_{n,b}}{100}\right\rfloor
  + \left\lceil\frac{F_{n,b}\bmod 100}{8}\right\rceil .
\end{equation*}
Chunk $b$ starts at $a_{n,b}=2(b-1)$ seconds from the utterance start under the 2-second schedule. For $j=1,\dots,T_{n,b}$, its speech position and corresponding query position in \Eqref{eq:streaming-readout} share the nominal time coordinate
\begin{equation*}
  \tau_n(o_{n,b}+j) = \tau_n(o_{n,b}+T_{n,b}+j)
  = a_{n,b} + \left\lfloor\frac{j-1}{13}\right\rfloor
    + 0.08\bigl((j-1)\bmod 13\bigr) .
\end{equation*}
These coordinates, in seconds, follow the frontend geometry and govern cache selection without imposing token alignments or emission times.
A final 10 ms chunk yields three mel frames after padding and one CTC frame.

\paragraph{Streaming attention.} For the mask in \Figref{fig:streaming-readout}\captiona{}, let $\sJ_n=\{P+1,\dots,P+2T_n\}$ contain the valid non-prompt sequence positions, and let $b_n(k)$ give the chunk containing position $k$.
For position $i\in\sJ_n$ in chunk $b$, the visible key positions at every LLM layer are
\begin{equation*}
  \begin{aligned}
    \gV_n(i) ={}& \{1,\dots,P\} \\
    &{}\cup\{k\in\sJ_n:b_n(k)<b,\ a_{n,b}-h\leq\tau_n(k)<a_{n,b}\} \\
    &{}\cup\{k\in\sJ_n:b_n(k)=b,\ k\leq i\} .
  \end{aligned}
\end{equation*}
A prompt position $i$ sees only prompt positions $1,\dots,i$, and batch padding is excluded. The history cutoff is fixed at the chunk start for all its positions.
For example, a 1-second history retains the last 13 speech positions and the last 13 query positions of the preceding full chunk, which occupy two separate ranges in the concatenated sequence.
After chunk $b$, the KV cache keeps the prompt and non-prompt entries with $\tau_n(k)\geq a_{n,b}+0.01F_{n,b}-h$.
For a full chunk, this is the next chunk's history cutoff.
Retained entries keep their original sequence indices and rotary positions.

\paragraph{Computation equivalence.} Consider parallel batch computation and per-utterance cached computation with fixed speech embeddings, parameters, rotary frequencies and $h$, and with dropout disabled.
The prompt states agree because prompt attention is causal. Suppose the cached states from earlier chunks match those from parallel batch computation at every LLM layer. The first layer of the current chunk receives the same speech and query embeddings. If a layer's current input states agree, its query, key and value projections agree. Together with the matching historical keys and values, these give the same attention output because both computations use $\gV_n(i)$ and the original rotary positions.
The positionwise transformations then give identical next-layer states.
Induction over LLM layers and chunks proves equality of query states in exact arithmetic.
Cache eviction preserves this argument because it removes only entries outside the next chunk's visible history.

\paragraph{Streaming computation.} Let $A$ bound the number of speech embeddings per audio chunk and $J$ the number of historical speech and query positions retained before a chunk. The LLM processes at most $2A$ new positions per chunk.
Excluding the one-time prompt computation, each LLM layer has at most $2A(P+J)+A(2A+1)$ attention pairs.
The speech encoder recomputes its bounded left history with the current chunk, while the LLM reuses its KV cache.
Each new query still requires full-vocabulary projection.
For fixed chunk and history lengths, attention work per chunk and cache size remain bounded as the utterance grows. Retained states can carry earlier context through the LLM layers.
For the 2-second chunks and 8-second history in \Appref{app:config:qwen-asr}, $A=26$ and $J=208$. The KV cache holds at most $P+208$ positions between chunks and $P+260$ while processing a chunk. These bounds concern attention storage, separately from model weights, CTC search state and accumulated text.

\subsection{Streaming consistency checks}
\label{app:readout:consistency}

\paragraph{Setup.} For Qwen3-ASR 0.6B and 1.7B, we compare parallel batch computation of all chunks with per-utterance cached computation that processes chunks sequentially using a KV cache, as in \Figref{fig:streaming-readout}\captionb{} and \captionc{}.
Both follow the same streaming attention rule.
All conditions use an FP32 LLM head on NVIDIA A100-SXM4-80GB GPUs, with dropout and augmentation disabled. The same eight GigaSpeech training batches use the training sampler's padded 200-second budget. Chunks are 2 seconds, and integer left histories range from 1 to 8 seconds.

\begin{figure}[t]
\centering
\includegraphics[width=\linewidth]{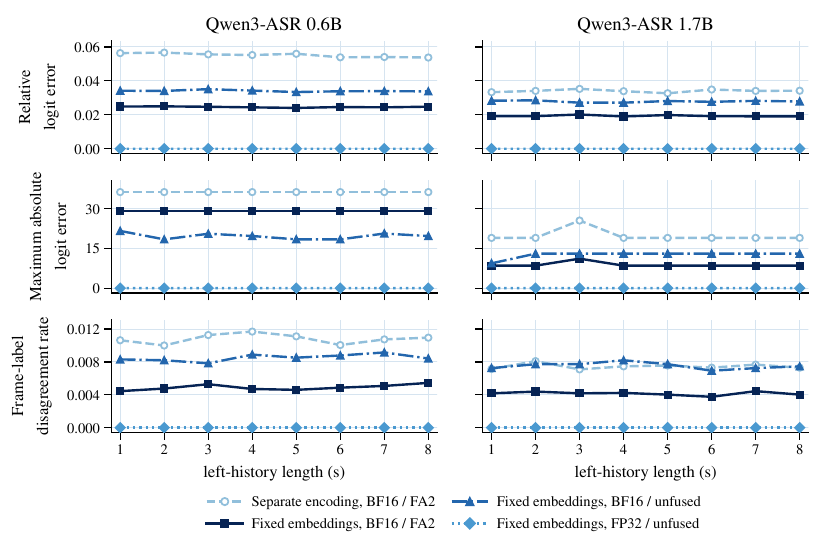}
\caption{Comparing parallel batch computation with per-utterance cached computation on GigaSpeech.
Rows show relative and maximum absolute logit errors and the greedy frame-label disagreement rate.
FA2 stands for FlashAttention~2.}
\label{fig:streaming-consistency}
\end{figure}

\paragraph{Input and precision controls.} The baseline uses BF16 mixed precision and FlashAttention 2.8.3, with speech embeddings computed separately for the two computation modes.
Fixed-input comparisons reuse speech embeddings saved once per history in both LLM computations, preserving parameter values, original positions, rotary frequencies and attention visibility.
We evaluate fixed inputs with BF16 FlashAttention and with an unfused attention implementation in BF16 and FP32, which computes matrix products and softmax separately.

\paragraph{Results.} For each model, condition and history, we pool valid query positions across the eight batches. \Figref{fig:streaming-consistency} reports relative full-vocabulary logit error, $\|\mZ_{\mathrm{cache}}-\mZ_{\mathrm{par}}\|_F/\|\mZ_{\mathrm{par}}\|_F$, maximum entrywise absolute error over all head classes, and the greedy frame-label disagreement rate.
The baseline's relative logit error ranges from 0.0538 to 0.0567 for 0.6B and from 0.0327 to 0.0353 for 1.7B.
Fixed embeddings yield ranges of 0.0241 to 0.0250 and 0.0190 to 0.0202, respectively.
With fixed inputs and BF16 unfused attention, the ranges are 0.0334 to 0.0351 and 0.0271 to 0.0285.
FP32 unfused attention yields relative errors of 0.00000299 to 0.00000489 across both models, with a maximum absolute error of $6.41\times10^{-3}$ and no greedy frame-label disagreements.
Fixing the embeddings improves BF16 agreement, and the remaining differences depend strongly on LLM precision and attention implementation.

\end{document}